\definecolor{Red}{rgb}{1,0,0}
\definecolor{Blue}{rgb}{0,0,1}
\definecolor{Olive}{rgb}{0.41,0.55,0.13}
\definecolor{Green}{rgb}{0,1,0}
\definecolor{MGreen}{rgb}{0,0.8,0}
\definecolor{DGreen}{rgb}{0,0.55,0}
\definecolor{Yellow}{rgb}{1,1,0}
\definecolor{Cyan}{rgb}{0,1,1}
\definecolor{Magenta}{rgb}{1,0,1}
\definecolor{Orange}{rgb}{1,.5,0}
\definecolor{Violet}{rgb}{.5,0,.5}
\definecolor{Purple}{rgb}{.75,0,.25}
\definecolor{Brown}{rgb}{.75,.5,.25}
\definecolor{Grey}{rgb}{.5,.5,.5}
\theoremstyle{plain}
\newtheorem{theorem}{Theorem} 
\newtheorem{corollary}{Corollary}
\newtheorem{claim}{Claim}
\newtheorem{lemma}{Lemma}
\theoremstyle{remark}
\newtheorem{remark}{Remark}
\theoremstyle{definition}
\newtheorem{definition}{Definition}
\newcommand{\Ac}{\mathcal{A}}
\newcommand{\Bc}{\mathcal{B}}
\newcommand{\Cc}{\mathcal{C}}
\newcommand{\Gc}{\mathcal{G}}
\newcommand{\Pc}{\mathcal{P}}
\newcommand{\Uc}{\mathcal{U}}
\newcommand{\Vc}{\mathcal{V}}
\newcommand{\Wc}{\mathcal{W}}
\newcommand{\Xc}{\mathcal{X}}
\newcommand{\Yc}{\mathcal{Y}}
\newcommand{\Zc}{\mathcal{Z}}
\newcommand{\Ut}{{\tilde{U}}}
\newcommand{\Vt}{{\tilde{V}}}
\newcommand{\Wt}{{\tilde{W}}}
\newcommand{\ut}{{\tilde{u}}}
\newcommand{\vt}{{\tilde{v}}}
\newcommand{\wt}{{\tilde{w}}}
\def\a{\alpha}
\def\e{\epsilon}
\def\la{\lambda}
\let\P\relax
\DeclareMathOperator\P{P}
\def\textiid{i.i.d.\@\xspace}
\newcommand\iid{\ifmmode\text{ i.i.d. } \else \textiid \fi}
\newcommand{\qmf}{\mathfrak{q}}
\newcommand{\mar}{\to}
\newcommand{\dash}{\mbox{-}}
\newcommand{\beqs}{\begin{equation*}}
\newcommand{\eeqs}{\end{equation*}}
\newcommand{\beq}{\begin{equation}}
\newcommand{\eeq}{\end{equation}}
\title{On Marton's inner bound and its optimality for classes of product broadcast channels}
\author{Yanlin~Geng,~\IEEEmembership{Member,~IEEE,}
        Amin~Gohari,~\IEEEmembership{Member,~IEEE,}
        Chandra~Nair,~\IEEEmembership{Member,~IEEE,}
        and~Yuanming~Yu 
\thanks{The work of A.~Gohari was partially supported by the following grant from Iran National Science Foundation, No. 89003743.}%
\thanks{The work of C.~Nair was partially supported by the following grants from the University Grants Committee of the Hong Kong Special Administrative Region, China: a) (Project No. AoE/E-02/08), b) GRF Project 415810. He also acknowledges the support from the Institute of Theoretical Computer Science and Communications (ITCSC) at The Chinese University of Hong Kong.}%
\thanks{This paper contains the proofs of the results that were in part presented at ITA, UCSD and at ISIT, 2011.}
}
\date{}
\newcommand{\p}{{\rm P}}
\begin{document}
\maketitle
\begin{abstract}
Marton's inner bound is the tightest known inner bound on the capacity region of the broadcast channel. It is not known, however, if this bound is tight in general. One approach to settle this key open problem in network information theory is to investigate the multi-letter extension of Marton's bound, which is known to be tight in general. This approach has become feasible only recently through the development of a new method for bounding cardinalities of auxiliary random variables by Gohari and Anantharam.  This paper undertakes this long overdue approach to establish several new results, including $(i)$ establishing the optimality of Marton\rq{}s bound for new classes of product broadcast channels, (ii) showing that the best known outer bound by Nair and El Gamal is not tight in general, and (iii) finding sufficient conditions for a global maximizer of Marton's bound that imply that the 2-letter extension does not increase the achievable rate. Motivated by the new capacity results, we establish a new outer bound on the capacity region of product broadcast channels in general. 
\end{abstract}

\section{Introduction}
Consider the broadcast channel $\qmf(y,z|x)$ with private messages depicted in Figure \ref{fig:bc}. The sender $X$ wishes to communicate a message $M_1$ at rate $R_1$ to receiver $Y_1$ and a message $M_2$ at rate $R_2$ to another receiver $Y_2$. What is the capacity region, that is, the closure of the set of achievable rate pairs $(R_1,R_2)$?

This question is one of the key open problems in network information theory. Since the introduction of this problem in the groundbreaking paper by Cover~\cite{cov72},  several inner and outer bounds on the capacity region of this channel have been developed and shown to be tight in some special cases; see Chapters 5, 8, and 9 of~\cite{elk12} for a detailed discussion of previous works.

Marton's inner bound~\cite{mar79} and the UV outer bound~\cite{nae07}(also sometimes referred to as the Nair--El Gamal outer bound) are the tightest known bounds on the capacity region of the broadcast channel. These bounds have been shown to coincide for all classes of broadcast channels with known capacity regions. Recently it has been shown \cite{naw08,goa09,jon09} that there are channels for which these inner and outer bounds do not coincide. Therefore, clearly at least one of them is strictly sub-optimal.

In this paper we show that the UV outer bound is strictly suboptimal by establishing the capacity region for a new class of broadcast channels and showing that this capacity region coincides with Marton's inner bound but not with the UV outer bound. This result is only one consequence of exploring an approach to establish the optimality (or lack thereof)  of Marton's region by investigating its multi-letter extension. This approach, although conceptually simple, has only become interesting recently. This is due to the fact that cardinality bounds on the auxiliary random variables in Marton\rq{}s inner bound were recently established in \cite{goa09}; and only since then did Marton\rq{}s inner bound become computable and hence amenable to numerical simulations for test channels.

\subsection{Preliminaries}
\label{sse:prelim}


\begin{figure}[h]\centering
\begin{tikzpicture}[scale=1.3]
\node at (0,0) {$(M_1,M_2)$};
\draw [->,thick] (.8,0) -- (1.6,0);
\draw (1.6,-.5) rectangle +(1.6,1); \node at (2.4,0) {Encoder};
\draw [->,thick] (3.2,0)--(4,0); \node at (3.6,.3) {$X^n$};
\draw (4,-1.5) rectangle +(1.6,3); \node at (4.8,0) {$\qmf(y,z|x)$};
\draw [->,thick] (5.6,1) --(6.4,1); \node at (6,1.3) {$Y^n$};
\draw [->,thick] (5.6,-1) --(6.4,-1); \node at (6.1,-.7) {$Z^n$};
\draw (6.4,.5) rectangle +(2,1); \node at (7.4,1) {Decoder 1};
\draw (6.4,-1.5) rectangle +(2,1); \node at (7.4,-1) {Decoder 2};
\draw [->,thick] (8.4,1) --(9.2,1); \node at (9.6,1) {$\hat M_1$};
\draw [->,thick] (8.4,-1) --(9.2,-1); \node at (9.6,-1) {$\hat M_2$};
\end{tikzpicture}
\caption{A broadcast channel}
\label{fig:bc}
\end{figure}

In the broadcast channel setting  a sender $X$, who has messages $M_1, M_2$, wishes to communicate message $M_1$ to receiver $Y$ and $M_2$ to receiver $Z$ over a noisy discrete memoryless broadcast channel $\qmf(y,z|x)$.
A set of rate pairs $(R_1, R_2)$ is said to be achievable for  this broadcast channel, $\qmf(y,z|x)$,  if there is a sequence of codebooks, each consisting of:
\begin{itemize}
\item an encoder at the sender that maps the message pair $(M_1, M_2)$ into a sequence $X^n,$
\item a decoder at receiver $Y$ that maps the received sequence $Y^n$ into an estimate $\hat{M}_1$ of its intended message $M_1$, and
\item a decoder at receiver $Z$ that maps the received sequence $Z^n$ into an estimate $\hat{M}_2$ of its intended message $M_2$
\end{itemize}
such that $\P( \hat{M}_1 \neq M_1),  \P( \hat{M}_2 \neq M_2) \to 0$ as $n \to \infty$, when the messages $M_1, M_2$ are uniformly distributed in $[1:2^{nR_1}] \times [1:2^{nR_2}]$. The capacity region is the closure of the set of all achievable rate pairs. An evaluable characterization of this capacity region is a well known open problem.

An inner bound to the capacity region refers to a set of rate pairs for which there is a strategy to achieve it.  The best known inner bound to the capacity region of the two receiver broadcast channel is due to Marton \cite{mar79}. It is not known if Marton's inner bound is optimal or not. Marton's inner bound for a general two-receiver discrete-memoryless broadcast channel with private messages is the following:

{\em Inner bound}: (Marton \cite{mar79})
The union of rate pairs $(R_1, R_2)$ satisfying the inequalities
\begin{align}
R_1 &\leq I(U,W;Y) \nonumber \\
R_2 & \leq I(V,W;Z) \label{eq:mib}\\
R_1 + R_2 &\leq  \min\{I(W;Y), I(W;Z)\} + I(U;Y|W) + I(V;Z|W) - I(U;V|W) \nonumber
\end{align}
over all $(U,V,W,X): (U,V,W) \mar X \mar (Y,Z)$ forms a Markov chain constitutes an inner bound to the capacity region. Further to compute this region it suffices \cite{goa09} to consider $|\Uc|, |\Vc| \leq |\Xc|, |\Wc| \leq |\Xc| + 4.$

One of the main results of this paper is computing the capacity region for new classes of product broadcast channels, and deducing that the best  outer bound previously known is strictly sub-optimal. The best  outer bound\footnote{Though there have been several proposed outer bounds since \cite{nae07}, it was shown in \cite{nai11} that they reduced to the one in \cite{nae07} for the private messages case.} for a general two-receiver discrete-memoryless broadcast channel with private messages is the following:

{\em Outer bound}: (UV outer bound \cite{nae07})
The union of rate pairs $(R_1, R_2)$ satisfying the inequalities
\begin{align*}
R_1 &\leq I(U;Y) \\
R_2 & \leq I(V;Z) \\
R_1 + R_2 &\leq  I(U;Y) + I(X;Z|U) \\
R_1 + R_2 &\leq  I(V;Z) + I(X;Y|V) \\
\end{align*}
over all $(U,V,X): (U,V) \mar X \mar (Y,Z)$ forms a Markov chain constitutes an outer bound to the capacity region. Further to compute this region it suffices to consider $|\Uc|, |\Vc| \leq |\Xc|+1.$

For the case with  common and private messages  requirement  one has the following outer bound for the capacity region.

{\em Outer bound}: (UVW outer bound \cite{nai11})
The union of rate triples $(R_0,R_1,R_2)$ satisfying the inequalities
\begin{align*}
 R_0 &\leq \min \{I(W;Y), I(W;Z)\}\\
 R_0+R_1 &\leq I(U;Y|W) + \min \{I(W;Y), I(W;Z)\}\\\
 R_0 + R_2 &\leq I(V;Z|W)+ \min \{I(W;Y), I(W;Z)\}\\ 
 R_0+R_1+ R_2 &\leq \min \{I(W;Y), I(W;Z)\} + I(X;Y|V,W)+ I(V;Z|W)\\
 R_0+R_1+ R_2 &\leq \min \{I(W;Y), I(W;Z)\} + I(U;Y|W)+ I(X;Z|U,W)
\end{align*}
over all $(U,V,W,X)$ such that $(U,V,W) \mar X \mar (Y,Z)$
forms a Markov chain constitutes an outer bound to the capacity region. Further  it suffices to consider $|\Wc| \leq |\Xc| + 5, |\Uc| \leq |\Xc| + 1, |\Wc| \leq |\Xc| + 1$.

It is also established in \cite{nai11} that when $R_0=0,$ the UVW outer bound reduces to the UV outer bound.

\subsubsection{Definitions of some classes of  broadcast channels} \label{sec:defn}
\begin{definition}
A broadcast channel $\qmf(y,z|x)$ is said to be a {\em product broadcast channel} if $\Xc = (\Xc_1, \Xc_2), \Yc = (\Yc_1, \Yc_2), \Zc=(\Zc_1, \Zc_2)$ and $\qmf(y_1, y_2, z_1, z_2|x_1,x_2) = \qmf_1(y_1,z_1|x_1)\qmf_2(y_2,z_2|x_2)$. Here we denote $\qmf = \qmf_1 \times \qmf_2$.
\end{definition}

\begin{definition}
A product broadcast channel $\qmf=\qmf_1 \times \qmf_2$ is said to be {\em reversely semi-deterministic} if  the channel to one of the receivers in the first component is deterministic, and the channel to the other receiver in the second component is deterministic. That is either both $\qmf_1(y_1|x_1), \qmf_2(z_2|x_2) \in \{0,1\}$ or both $\qmf_1(z_1|x_1), \qmf_2(y_2|x_2) \in \{0,1\}$.
\end{definition}

\begin{definition}
A product broadcast channel $\qmf=\qmf_1 \times \qmf_2$ is said to be {\em reversely more capable} if  one of the following two holds:
\begin{itemize}
\item $I(X_1;Y_1) \geq I(X_1;Z_1), ~ \forall p(x_1)$, and $ I(X_2;Z_2) \geq I(X_2;Y_2), ~ \forall p(x_2),$
\item $I(X_1;Z_1) \geq I(X_1;Y_1), ~ \forall p(x_1)$, and $ I(X_2;Y_2) \geq I(X_2;Z_2), ~ \forall p(x_2).$
\end{itemize}
\end{definition}


\subsection{Organization and summary of results}
The rest of the paper is organized as follows. In Section \ref{sec:UVloose} we show that the UV outer bound is not tight (Claim \ref{cl:uvsub}). To do so, we need to introduce a quantity called the $\la$-sum-rate and use some of its properties to compare the inner and outer bounds. In Section \ref{subsec:lambdasumrate} we will establish some additional properties of $\la$-sum-rate some of which will be used critically for explicit evaluations of bounds. In Section \ref{sec:capacityregions} we establish a new outer bound (Claim \ref{cl:obp}) for product broadcast channels and use it to determine the capacity region of some new classes (Theorems \ref{Thm2} and \ref{th:mcp}). This outer bound is strictly better than the UV outer bound as it is optimal for the example where the UV outer bound is loose. In  (Lemma \ref{le:mnf}) we show that Marton\rq{}s region for a product of two non-identical broadcast channels can be strictly larger than the (Minkowski) sum of the individual regions.
Section \ref{sec:randomziedtimedivision} deals with a particular coding strategy, randomized time-division, that is equivalent to Marton's inner bound (albeit much simpler) for binary input broadcast channels. If the generalization of the strategy described in Section \ref{sec:randomziedtimedivision} is indeed equivalent to Marton's inner bound for products of binary input broadcast channels, then one can deduce the optimality of Marton's coding scheme for binary input broadcast channels from a result (Theorem \ref{th:rtdf}) we establish in this section. Essentially, the results in this section show a reduction from an $n$-letter characterization to a single-letter characterization within a  randomized time-division strategy. Finally,  some of the technical arguments as well as other lengthy but routine arguments are relegated to the Appendices.

\section{The UV outer bound is not tight}
\label{sec:UVloose}

The flow of this section is as follows: we first introduce {\it $\la$-sum-rate}, a quantity that helps in the computation of Marton\rq{}s inner bound. To explicitly compute the sum rate for product channels we introduce the notion of {\it factorization of  $\la$-sum-rate}. Using this factorization idea, we show that Marton\rq{}s sum rate is optimal for the product of reversely semi-deterministic channels. Having computed the optimal sum rate,  the UV outer bound is shown to be strictly suboptimal (via a specifically constructed example) over this class of broadcast channels.

\subsection{Definitions and preliminary results}
\label{sse:prelimd}
Given a broadcast channel $\qmf(y,z|x)$ we define the following quantities for $\la \in [0,1]$ and for auxiliary random variables $(U,V,W)$ that satisfy the Markov chain $(U,V,W) \mar X \mar (Y,Z)$:
{\small \begin{align}
 \la\dash SR_M(\qmf,p(u,v,w,x)) & :=  \la I(W;Y) + (1-\la) I(W;Z)   + I(U;Y|W) + I(V;Z|W) - I(U;V|W)
\label{eq:lmib00lm}
\end{align}}
 \begin{align}
 \la\dash SR_M(\qmf,p(x)) & := \max_{\substack{p(u,v,w|x):\\(U,V,W)\to X\to (Y,Z)}} \la I(W;Y) + (1-\la) I(W;Z)   + I(U;Y|W) \nonumber \\
 & \qquad \qquad \qquad  \qquad \qquad \qquad+ I(V;Z|W) - I(U;V|W)
\label{eq:lmib1lm}
\end{align}
 \begin{align}
 \la\dash SR_M(\qmf) & := \max_{\substack{p(u,v,w,x):\\(U,V,W)\to X\to (Y,Z)}} \la I(W;Y) + (1-\la) I(W;Z)   + I(U;Y|W) \nonumber\\
 & \qquad \qquad \qquad  \qquad \qquad \qquad+ I(V;Z|W) - I(U;V|W)
\label{eq:lmib1l}
\end{align}
Note the following relations: 
$$ \la\dash SR_M(\qmf,p(x)) = \max_{\substack{p(u,v,w|x):\\(U,V,W)\to X\to (Y,Z)}}\la\dash SR_M(\qmf,p(u,v,w,x)),$$
and $$ \la\dash SR_M(\qmf) = \max_{p(x)} \la\dash SR_M(\qmf,p(x)).$$ Further one can verify  that $ \la\dash SR_M(\qmf,p(x))$ is concave in $p(x)$ for a fixed $\la$.

Note that the maximum sum rate yielded by Marton\rq{}s inner bound in \eqref{eq:mib} is given by
$$ SR_M(\qmf) := \max_{\substack{p(u,v,w,x):\\(U,V,W)\to X\to (Y,Z)}} \min\{I(W;Y), I(W;Z)\}    + I(U;Y|W) + I(V;Z|W) - I(U;V|W). $$
Hence $SR_M(\qmf) = \max_{p(u,v,w,x)} \min_{\la \in [0,1]} \la\dash SR_M(\qmf,p(u,v,w,x))$.

The following lemma allows us to shift the discussion from Marton\rq{}s sum rate to $\la$-sum-rate, and then return to Marton\rq{}s sum rate at a later point to complete our arguments.

\begin{lemma} The following min-max theorem holds:
\label{le:mm}
\begin{align*}
\max_{p(u,v,w,x)} \min_{\la \in [0,1]} \la\dash SR_M(\qmf,p(u,v,w,x)) &  = \max_{p(x)} \min_{{ \la \in [0,1]}} \max_{p(u,v,w|x)} \la\dash SR_M(\qmf,p(u,v,w,x)) \\
&  = \min_{{\la \in [0,1]}} \max_{p(u,v,w,x)}  \la\dash SR_M(\qmf,p(u,v,w,x)).
\end{align*}
This implies that the sum rate of Marton's inner bound can be calculated using any of the three above expressions.
\end{lemma}
\begin{proof} The proof is presented in  Appendix \ref{Apendix0} and can be considered as an application of a min-max theorem of Terkelsen\cite{ter72}. The fact that \begin{align*}SR_M(\qmf) = \max_{p(u,v,w,x)} \min_{\la \in [0,1]} \la\dash SR_M(\qmf,p(u,v,w,x)) = \min_{{\la \in [0,1]}} \max_{p(u,v,w,x)}  \la\dash SR_M(\qmf,p(u,v,w,x))\end{align*} was also established in section 3.1.1 of \cite{gea10}. However Corollary \ref{coro:mm} established in the Appendix, which is the crux of the current proof, can also be used in other instances where a max-min occurs, such as compound channels. 
\end{proof}

\begin{definition}
For a given product channel $\qmf_1(y_1,z_1|x_1) \times \qmf_2(y_2,z_2|x_2)$ we say that the 
{\em $\la$-sum-rate factorizes} if for all $p(x_1,x_2)$ we have
\begin{equation}
\la \dash SR_M(\qmf_1 \times \qmf_2, p_{X_1,X_2}(x_1,x_2))\leq \la \dash SR_M(\qmf_1,p_{X_1}(x_1)) + \la \dash SR_M(\qmf_2,p_{X_2}(x_2)).
\label{eq:lsumf}
\end{equation}
\end{definition}

\subsubsection*{Sufficient conditions for factorization of $\la$-sum-rate}

In this section, we derive sufficient conditions under which \eqref{eq:lsumf} holds.
The following claim is key to the arguments  in this section.
\begin{claim}
\label{cl:one}
Let $U_1=U_2=U, V_1=V_2=V, W_1=(W,Z_2), W_2=(W,Y_1)$. Then the following holds:
 \begin{align*} &\la \dash SR_M(\qmf_1 \times \qmf_2, p(u,v,w,x_1,x_2)) \\
&\quad =  \la \dash SR_M(\qmf_1, p(u_1,v_1,w_1,x_1)) + \la \dash SR_M(\qmf_2, p(u_2,v_2,w_2,x_2))  + I(U;V|W,Y_1,Z_2)\\
& \qquad - \la I(Y_1;Y_2) - (1-\la) I(Z_1;Z_2) - I(Y_1;Z_2|U,V,W).
\end{align*}
\end{claim}
\begin{proof}
\begin{align*}
& \la \dash SR_M(\qmf_1 \times \qmf_2, p(u,v,w,x_1,x_2))\\
&\quad = \la I(W;Y_1,Y_2) + (1-\la) I(W;Z_1, Z_2) + I(U;Y_1, Y_2|W) + I(V;Z_1, Z_2|W) - I(U;V|W) \\
& \quad = \la I(W, Z_2;Y_1) + (1-\la) I(W, Z_2;Z_1) + I(U;Y_1|W,Z_2) + I(V;Z_1|W, Z_2) - I(U;V|W,Z_2) \\
& \quad \quad +  \la I(W,Y_1; Y_2) + (1-\la) I(W,Y_1; Z_2) + I(U;Y_2|W, Y_1) + I(V; Z_2|W, Y_1) - I(U;V|W, Y_1) \\
& \quad \quad + I(U;V|W,Y_1,Z_2) - \la I(Y_1;Y_2) - (1-\la) I(Z_1;Z_2) - I(Y_1;Z_2|U,V,W). \qedhere
\end{align*} 
\end{proof}

Thus the excess term one needs to cancel (using a different choice of $(U_1,V_1,W_1)$ or $(U_2,V_2,W_2)$ or both) to ensure factorization, is at most $ I(U;V|W,Y_1,Z_2)$.

Also observe that one can get a similar identity by interchanging $Y_1 \leftrightarrow Z_1$ and $Z_2 \leftrightarrow Y_2$. Here $W_1=(W,Y_2)$ and $W_2=(W,Z_1)$. This will yield the term $ I(U;V|W,Y_2,Z_1)$ instead of $ I(U;V|W,Y_1,Z_2)$.

\begin{theorem}
\label{th:fact}
The $\la$-sum-rate factorizes (as in \eqref{eq:lsumf}) if either of the conditions below hold:
\begin{enumerate}
\item Any one of the four channels $X_1 \to Y_1; X_1 \to Z_1; X_2 \to Y_2$ or $X_2 \to Z_2$ is deterministic.
\item In either  of the two components, one channel is more capable than the other.
\end{enumerate}
\label{th:1suff}
\end{theorem}
\begin{proof}
Assume the first condition holds. In particular let $X_2 \to Z_2$ be deterministic. Then we will show that
\begin{align*}
& \la \dash SR_M(\qmf_1 \times \qmf_2, p(u,v,w,x_1,x_2)) \leq \la \dash SR_M(\qmf_1, p(u_1,v_1,w_1,x_1)) + \la \dash SR_M(\qmf_2, p(u_2,v_2,w_2,x_2))
\end{align*}
where $U_1=U_2=U, V_1=V, V_2=Z_2, W_1=(W,Z_2), W_2=(W,Y_1)$. To show this, from Claim \ref{cl:one} it suffices to show that
$$ \la \dash SR_M(\qmf_2, p(u,v,w_2,x_2)) + I(U;V|W,Y_1,Z_2) \leq \la \dash SR_M(\qmf_2, p(u,z_2,w_2,x_2)), $$
where $w_2=(w,y_1)$.
Observe that
\begin{align*}
& \la \dash SR_M(\qmf_2, p(u,v,w_2,x_2)) + I(U;V|W,Y_1,Z_2)  \\
& \quad = \la I(W,Y_1; Y_2) + (1-\la) I(W,Y_1; Z_2) + I(U;Y_2|W, Y_1) \\
& \quad \quad + I(V; Z_2|W, Y_1) - I(U;V|W, Y_1) + I(U;V|W,Y_1,Z_2)  \\
& \quad = \la I(W,Y_1; Y_2) + (1-\la) I(W,Y_1; Z_2) + I(U;Y_2|W, Y_1) + I(V; Z_2|U,W, Y_1) \\
& \quad \leq \la I(W,Y_1; Y_2) + (1-\la) I(W,Y_1; Z_2) + I(U;Y_2|W, Y_1) + H(Z_2|U,W, Y_1) \\
& \quad = \la I(W,Y_1; Y_2) + (1-\la) I(W,Y_1; Z_2) + I(U;Y_2|W, Y_1) + I(Z_2;Z_2|W, Y_1) - I(Z_2;U|W,Y_1) \\
&\quad = \la \dash SR_M(\qmf_2, p(u,z_2,w_2,x_2)),
\end{align*}
where we identify $w_2=(w,y_1)$.
Similar reasoning can deal with the case where $X_1 \to Y_1$ is a deterministic channel.

Note that if $X_2 \to Y_2$ is deterministic, then one must start with the interchanged $W_1, W_2$, i.e. $W_1=(W,Y_2), W_2=(W,Z_1)$, and similarly show that
$$ \la \dash SR_M(\qmf_2, p(u,v,w_2,x_2)) + I(U;V|W,Z_1,Y_2) \leq \la \dash SR_M(\qmf_2, p(y_2,v,w_2,x_2)), $$ where $w_2=(w,z_1)$. Finally, the case when $X_1 \to Z_1$ is deterministic can be dealt with similarly.

Proceeding to the second condition, let us assume that the channel $X_2 \to Y_2$ is more capable than the channel $X_2 \to Z_2$, i.e. for all $p(x_2), I(X_2;Y_2) \geq I(X_2;Z_2).$ Let $W_2=(W,Y_1)$. Then observe that
\begin{align*}
& \la \dash SR_M(\qmf_2, p(u,v,w_2,x_2)) + I(U;V|W,Y_1,Z_2)  \\
& \quad = \la I(W,Y_1; Y_2) + (1-\la) I(W,Y_1; Z_2) + I(U;Y_2|W, Y_1) \\
& \quad \quad + I(V; Z_2|W, Y_1) - I(U;V|W, Y_1) + I(U;V|W,Y_1,Z_2)  \\
& \quad = \la I(W,Y_1; Y_2) + (1-\la) I(W,Y_1; Z_2) + I(U;Y_2|W, Y_1) + I(V; Z_2|U,W, Y_1) \\
& \quad \leq \la I(W,Y_1; Y_2) + (1-\la) I(W,Y_1; Z_2) + I(U;Y_2|W, Y_1) + I(X_2; Z_2|U,W, Y_1) \\
& \quad \leq \la I(W,Y_1; Y_2) + (1-\la) I(W,Y_1; Z_2) + I(U;Y_2|W, Y_1) + I(X_2; Y_2|U,W, Y_1) \\
& \quad = \la I(W,Y_1; Y_2) + (1-\la) I(W,Y_1; Z_2) + I(X_2;Y_2|W, Y_1)\\
& \quad = \la \dash SR_M(\qmf_2, p(x_2,\emptyset,w_2,x_2)).
\end{align*}
Thus from Claim \ref{cl:one} we have the factorization of $\la \dash SR_M(\qmf_1 \times \qmf_2)$.

Similar reasoning works for the other three cases. Again observe that when $Z_2$ is more capable than $Y_2$ or $Y_1$ is more capable than $Z_1$, one should start with start with the interchanged $W_1, W_2$, i.e. $W_1=(W,Y_2), W_2=(W,Z_1)$. This completes the proof of the lemma.
\end{proof}

\subsection{Optimal sum rate for product of reversely semi-deterministic channels}
\begin{claim}
\label{cl:reverselysemideterministic}
Marton's sum rate is optimal for  the product of reversely semi-deterministic channels. Moreover the sum rate of such a product channel $\qmf_1\times \qmf_2$ is given by
\begin{align*}&\min_{\lambda\in[0,1]}\big(\la \dash SR_M(\qmf_1) + \la \dash SR_M(\qmf_2)\big).\end{align*}
\end{claim}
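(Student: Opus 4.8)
The plan is to derive both assertions of the claim from the min-max identity of Claim~\ref{cl:ltomib} together with the factorization of the $\la$-sum rate in Lemma~\ref{le:1suffa}, the optimality statement additionally requiring a tensorization step. Say $\qmf_1(y_1|x_1)$ and $\qmf_2(z_2|x_2)$ are the deterministic channels; the other orientation is handled identically.

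First I would prove the displayed formula. By Claim~\ref{cl:ltomib}, $SR_{MIB}(\qmf_1\times\qmf_2)=\min_{\la\in[0,1]}\la\dash SR(\qmf_1\times\qmf_2)$. Since $X_1\to Y_1$ is deterministic, the hypothesis of Lemma~\ref{le:1suffa} holds, so $\la\dash SR(\qmf_1\times\qmf_2)=\la\dash SR(\qmf_1)+\la\dash SR(\qmf_2)$ for every $\la\in[0,1]$. Substituting yields
$$SR_{MIB}(\qmf_1\times\qmf_2)=\min_{\la\in[0,1]}\big(\la\dash SR(\qmf_1)+\la\dash SR(\qmf_2)\big),$$
which is the asserted expression.

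For optimality, write $\qmf=\qmf_1\times\qmf_2$ and consider the $n$-fold product $\qmf\otimes_n$. Reordering coordinates identifies $\qmf\otimes_n$ with the product of the two channels $\qmf_1\otimes_n$ and $\qmf_2\otimes_n$; since the channel $X_1^n\to Y_1^n$ in $\qmf_1\otimes_n$ is again deterministic, Lemma~\ref{le:1suffa} applies to this product and gives $\la\dash SR(\qmf\otimes_n)=\la\dash SR(\qmf_1\otimes_n)+\la\dash SR(\qmf_2\otimes_n)$. Applying Lemma~\ref{le:1suffa} once more to each factor, written as $\qmf_i\otimes_n=\qmf_i\times(\qmf_i\otimes_{n-1})$ with $\qmf_i$ having a deterministic receiver, and inducting on $n$, we get $\la\dash SR(\qmf_i\otimes_n)=n\,\la\dash SR(\qmf_i)$, hence $\la\dash SR(\qmf\otimes_n)=n\big(\la\dash SR(\qmf_1)+\la\dash SR(\qmf_2)\big)$. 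Feeding this into Claim~\ref{cl:ltomib} applied to the channel $\qmf\otimes_n$,
$$SR_{MIB}(\qmf\otimes_n)=\min_{\la\in[0,1]}\la\dash SR(\qmf\otimes_n)=n\min_{\la\in[0,1]}\big(\la\dash SR(\qmf_1)+\la\dash SR(\qmf_2)\big)=n\,SR_{MIB}(\qmf_1\times\qmf_2).$$
Dividing by $n$ and letting $n\to\infty$, the left-hand side converges to the true sum-rate $SR^*(\qmf_1\times\qmf_2)$ (as recalled in the introduction), so $SR^*(\qmf_1\times\qmf_2)=SR_{MIB}(\qmf_1\times\qmf_2)$, which is the claimed optimality.

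Since Claim~\ref{cl:ltomib} and Lemma~\ref{le:1suffa} are taken as given, essentially all that is left is bookkeeping; the one point requiring care is the reindexing that exhibits $\qmf\otimes_n$ as a product of two channels one of whose receivers is deterministic, so that Lemma~\ref{le:1suffa} is genuinely applicable at the $n$-letter level, together with the induction collapsing $\la\dash SR(\qmf_i\otimes_n)$ to $n\,\la\dash SR(\qmf_i)$. The substantive difficulty --- showing that factorization of the $\la$-sum rate plus the interchange of $\min_\la$ and $\max_p$ afforded by Claim~\ref{cl:ltomib} is exactly what tensorizes Marton's sum rate --- is already isolated in those two results, so there is no further real obstacle in this step.
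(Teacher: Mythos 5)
Your proposal is correct and follows essentially the same route as the paper: both arguments combine the min--max identity of Claim~\ref{cl:ltomib} with repeated application of Lemma~\ref{le:1suffa} to show that the $n$-letter $\la$-sum rate of the product channel collapses to $n\big(\la\dash SR(\qmf_1)+\la\dash SR(\qmf_2)\big)$, and then invoke the convergence of the normalized $n$-letter Marton sum rate to $SR^*$ for the converse and the single-letter evaluation for achievability. The only difference is presentational (you make the coordinate reindexing and the induction on $n$ explicit, which the paper compresses into ``repeated application of Lemma~\ref{le:1suffa}'').
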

\begin{proof}
Take two semi-deterministic channels $\qmf_1(y_1,z_1|x_1)$ and $\qmf_2(y_2,z_2|x_2)$ where $Y_1$ is a deterministic function of $X_1$ and $Z_2$ is a deterministic function of $X_2$. 

Consider the $n$-letter $\la$-sum-rate of the product channel $\qmf_1\times \qmf_2$. Using Theorem \ref{th:1suff}  the $\la$-sum-rate of $n$-letter product channel factorizes into $\la$-sum-rate of two $n$-letter sub channels. Each term  again factorizes by repeated application of Theorem \ref{th:1suff}. More precisely,
\begin{align*}\la \dash SR_M(\qmf_1\otimes_{n} \times \qmf_2\otimes_{n})& = \la \dash SR_M(\qmf_1\otimes_{n})+ \la \dash SR_M(\qmf_2\otimes_{n})\\
&=n\cdot\la \dash SR_M(\qmf_1)+n\cdot\la \dash SR_M(\qmf_2).
\end{align*}
Marton's inner bound sum rate for the $n$-letter of the product channel $\qmf_1\times \qmf_2$ is equal to
\begin{align*}&\min_{\lambda\in[0,1]}\big(\la \dash SR_M(\qmf_1\otimes_{n} \times \qmf_2\otimes_{n})\big).\end{align*}
We can write the above expression as
\begin{align*}&n\cdot \min_{\lambda\in[0,1]}\big(\la \dash SR_M(\qmf_1)+\la \dash SR_M(\qmf_2)\big).\end{align*}
Therefore, the actual sum rate satisfies\footnote{We utilize the known fact that Marton's inner bound sum rate for the $n$-letter version of the channel approaches the optimal sum rate as $n$ goes to infinity.
}
\begin{align*}SR^*(\qmf_1 \times \qmf_2) &\leq \min_{\lambda\in[0,1]}\big(\la \dash SR_M(\qmf_1)+\la \dash SR_M(\qmf_2)\big).
\end{align*}

On the other hand, this sum rate is achievable since it is equal to the single letter Marton's inner bound for $\qmf_1\times \qmf_2$, i.e.
\begin{align*} SR^*(\qmf_1 \times \qmf_2)&=\min_{\lambda\in[0,1]}\la \dash SR_M(\qmf_1\times \qmf_2)=\min_{\lambda\in[0,1]}\big(\la \dash SR_M(\qmf_1)+\la \dash SR_M(\qmf_2)\big).\end{align*}
\end{proof}

\subsection{The UV outer bound is strictly suboptimal}

From  the UV outer bound the sum rate of a general broadcast channel can be bounded from above by
\begin{equation}
SR_{UV}(\qmf) = \max_{p(u,v,x)} \min \{ I(U;Y) + I(V;Z), I(U;Y) + I(X;Z|U), I(V;Z) + I(X;Y|V) \}.
\end{equation}
In this example we will demonstrate a product of reversely semi-deterministic channel, $\qmf = \qmf_1 \times \qmf_2$,  such that the optimal sum rate $SR^*(\qmf_1 \times \qmf_2)$ satisfies
$$ SR_M(\qmf_1 \times \qmf_2) = SR^*(\qmf_1 \times \qmf_2) < SR_{UV}(\qmf_1 \times \qmf_2). $$
This unequivocally shows that the UV outer bound is {\em strictly} suboptimal for the general broadcast channel.

\begin{remark}
Even if one were to consider the best outer bound with a common message requirement, the UVW outer bound \cite{nai11}, the fact that we are showing that the sum rate is strictly weak for the UV outer bound immediately implies the {\em strict} sub optimality of the UVW outer bound as well. To note this, observe that the projection of the UVW outer bound on the plane $R_0=0$ (which is shown in \cite{nai11} to be the UV outer bound)  is {\em strictly} suboptimal.
\end{remark}

\begin{claim}
\label{cl:uvsub}
Consider the reversely semi-deterministic channel in Figure \ref{fif:f1}. Assume that the transition probabilities are uniform across the possible outputs, i.e the red edges have a probability $\frac 13$ in the first component and the blue edges have a probability $\frac 13$ in the second component.  Then Marton's sum rate (the optimal sum rate) is given by $\frac 83=3-\frac13$ , while the UV sum rate is at least  $3 - \frac 1{15}$.
\begin{figure}
\begin{center}
\begin{minipage}{0.45\linewidth}
\centering
\resizebox{40mm}{!}{%
\begin{tikzpicture}[
    >=stealth,
    xnode/.style={draw, circle, minimum size=7mm},
    ynode/.style={draw, circle, minimum size=7mm},
    znode/.style={draw, circle, minimum size=7mm},
    every edge/.style={->, thick}
]

\node[xnode] (x1) at (0,3) {$1$};
\node[xnode] (x2) at (0,1) {$2$};
\node[xnode] (x3) at (0,-1) {$3$};
\node[xnode] (x4) at (0,-3) {$4$};

\node[ynode] (y1) at (4,4.0) {$1$};
\node[ynode] (y2) at (4,3.0) {$2$};

\node[znode] (z1) at (4,1.0) {$1$};
\node[znode] (z2) at (4,0.0) {$2$};
\node[znode] (z3) at (4,-1.0) {$3$};
\node[znode] (z4) at (4,-2.0) {$4$};
\node[znode] (z5) at (4,-3.0) {$5$};
\node[znode] (z6) at (4,-4.0) {$6$};

\node[above=2mm] at (x1.north) {$X_1\in\{1,2,3,4\}$};
\node[above=2mm] at (y1.north) {$Y_1\in\{1,2\}$};
\node[below=2mm] at (z6.south) {$Z_1\in\{1,2,3,4,5,6\}$};

\draw[->, blue, thick] (x1) -- (y1);
\draw[->, blue, thick] (x2) -- (y1);
\draw[->, blue, thick] (x3) -- (y2);
\draw[->, blue, thick] (x4) -- (y2);

\draw[->, red, thick] (x1) --  (z1);
\draw[->, red, thick] (x1) --  (z2);
\draw[->, red, thick] (x1) -- (z3);

\draw[->, red, thick] (x2) -- (z1);
\draw[->, red, thick] (x2) -- (z4);
\draw[->, red, thick] (x2) -- (z5);

\draw[->, red, thick] (x3) -- (z2);
\draw[->, red, thick] (x3) -- (z4);
\draw[->, red, thick] (x3) -- (z6);

\draw[->, red, thick] (x4) -- (z3);
\draw[->, red, thick] (x4) -- (z5);
\draw[->, red, thick] (x4) -- (z6);

\end{tikzpicture}}
\end{minipage}
\begin{minipage}{0.45\linewidth}
\centering
\resizebox{40mm}{!}{%
\begin{tikzpicture}[
    >=stealth,
    xnode/.style={draw, circle, minimum size=7mm},
    ynode/.style={draw, circle, minimum size=7mm},
    znode/.style={draw, circle, minimum size=7mm},
    every edge/.style={->, thick}
]

\node[xnode] (x1) at (0,3) {$1$};
\node[xnode] (x2) at (0,1) {$2$};
\node[xnode] (x3) at (0,-1) {$3$};
\node[xnode] (x4) at (0,-3) {$4$};

\node[ynode] (y1) at (4,4) {$1$};
\node[ynode] (y2) at (4,3) {$2$};

\node[znode] (z1) at (4,1.0) {$1$};
\node[znode] (z2) at (4,0.0) {$2$};
\node[znode] (z3) at (4,-1.0) {$3$};
\node[znode] (z4) at (4,-2.0) {$4$};
\node[znode] (z5) at (4,-3.0) {$5$};
\node[znode] (z6) at (4,-4.0) {$6$};

\node[above=2mm] at (x1.north) {$X_2\in\{1,2,3,4\}$};
\node[above=2mm] at (y1.north) {$Z_2\in\{1,2\}$};
\node[below=2mm] at (z6.south) {$Y_2\in\{1,2,3,4,5,6\}$};

\draw[->, red, thick] (x1) -- (y1);
\draw[->, red, thick] (x2) -- (y1);
\draw[->, red, thick] (x3) -- (y2);
\draw[->, red, thick] (x4) -- (y2);

\draw[->, blue, thick] (x1) --  (z1);
\draw[->, blue, thick] (x1) --  (z2);
\draw[->, blue, thick] (x1) -- (z3);

\draw[->, blue, thick] (x2) -- (z1);
\draw[->, blue, thick] (x2) -- (z4);
\draw[->, blue, thick] (x2) -- (z5);

\draw[->, blue, thick] (x3) -- (z2);
\draw[->, blue, thick] (x3) -- (z4);
\draw[->, blue, thick] (x3) -- (z6);

\draw[->, blue, thick] (x4) -- (z3);
\draw[->, blue, thick] (x4) -- (z5);
\draw[->, blue, thick] (x4) -- (z6);

\end{tikzpicture}}
\end{minipage}
\end{center}
\caption{A reversely semi-deterministic product broadcast channel.\\ \textnormal{Remark: The published version of this paper contains an
  incorrect channel diagram. The calculations are correct and correspond
  to the channel shown here. The authors apologize for this oversight.}}
\label{fif:f1}
\end{figure}
\end{claim}
\begin{proof}
We begin by showing that Marton's sum rate (the optimal sum rate) is given by $\frac 83$. Claim \ref{cl:reverselysemideterministic} shows that the sum rate of $\qmf_1\times \qmf_2$ is \begin{align*}&\min_{\lambda\in[0,1]}\big(\la \dash SR_M(\qmf_1)+\la \dash SR_M(\qmf_2)\big).\end{align*}
The result of Appendix \ref{Apendix0.5} implies that for any $\lambda\in[0,1]$, $\la \dash SR_M(\qmf_1)$ is equal to $\la \dash SR_M(\qmf_1,u(x_1))$ where $u$ is the uniform distribution on $\mathcal{X}_1$. A similar statement holds for $\la \dash SR_M(\qmf_2)$. Therefore the sum rate of $\qmf_1\times \qmf_2$ is equal to
\begin{align}&\min_{\lambda\in[0,1]}\big(\la \dash SR_M(\qmf_1,u(x_1))+\la \dash SR_M(\qmf_2,u(x_2))\big).  \label{eq:onq}\end{align}
By symmetry, $\la \dash SR_M(\qmf_2,u(x_2))=(1-\la) \dash SR_M(\qmf_1,u(x_1))$. Therefore we can express the sum rate as \begin{align*}&
\min_{\lambda\in[0,1]}\big(\la \dash SR_M(\qmf_1,u(x_1))+(1-\la) \dash SR_M(\qmf_1,u(x_1))\big).\end{align*}
In Appendix \ref{Apendix1} we show that $\la \dash SR_M(\qmf_1,u(x_1))$ is equal to
$$\la \dash SR_M(\qmf_1,u(x_1)) = \begin{cases} \begin{array}{ll} \frac 53 - \frac 23 \la & \la \in [0,\frac 12] \\ \frac 43 & \la \in [\frac 12, 1] \end{array} \end{cases}. $$
Substituting this function into \eqref{eq:onq} we see that the minimum occurs uniquely at $\lambda=0.5$ and the optimum sum rate is equal to $\frac{8}{3}$.

To compute a lower bound on the $UV$ sum rate, let $p(x_1,x_2)=u(x_1)u(x_2)$, i.e. independent uniform distribution on $\mathcal{X}_1$ and $\mathcal{X}_2$.
We define $U_1,V_1,X_1,U_2,V_2,X_2$ having a joint distribution of the form $p(u_1,v_1,x_1)p(u_2,v_2,x_2)$ as follows. Let $U_1=Y_1$ and $p(u_2,x_2)$  satisfy
\begin{align*}&\P(X_2 = 1|U_2=1) = \P(X_2 = 3|U_2=1) = \frac 12,~\mbox{and}~\P(X_2 = 2|U_2=1) = \P(X_2 = 4|U_2=1) = \frac 12,\\&
\P(U_2=1)=\P(U_2=2) =\frac 12.\end{align*} Similarly, let $V_2=Z_2$ and $p(v_1,x_1)$  satisfy
\begin{align*}&\P(X_1 = 1|V_1=1) = \P(X_1 = 3|V_1=1) = \frac 12,~\mbox{and}~\P(X_1 = 2|V_1=1) = \P(X_1 = 4|V_1=1) = \frac 12,\\&\P(V_1=1)=\P(V_1=2) =\frac 12.\end{align*}

Let  binary random variables $Q_1$ and $Q_2$ be mutually independent of each other, and independent of $U_1,V_1,X_1,U_2,V_2,X_2$. Furthermore assume that $\P(Q_1=0)=\P(Q_2=0) = \frac 45$. Define $V'_1$ and $U'_2$ as follows: When $Q_1=0$ set $V'_1 = V_1$ and else set $V'_1 = X_1$. When $Q_2=0$ set $U'_2 = U_2$ and else set $U'_2 = X_2$.
Lastly set
$\tilde{V}_1 = (V'_1, Q_1)$
$\tilde{U}_2 = (U'_2, Q_2)$.

We consider the $UV$ region for the choice of $(U_1,\tilde{U}_2)$, $(\tilde{V}_1,V_2)$, $(X_1, X_2)$. Note that
\begin{align*}
R_1 &\leq I(U_1,\tilde{U}_2; Y_1,Y_2)\\&=I(U_1; Y_1)+I(\tilde{U}_2; Y_2)\\&=
H(Y_1) + \frac 45 I(U_2;Y_2) + \frac 15 I(X_2;Y_2)\\&=1 + \frac 4{5}\cdot\frac 1{3} + \frac 1{5}\cdot1 \\&= \frac{22}{15}.\end{align*}
Similarly, one can show that
\begin{align*}
R_2 &\leq I(\tilde{V}_1,V_2; Z_1,Z_2)\\&=\frac{22}{15}.\end{align*}
The sum rate constraint on $R_1+R_2$ is as follows:
\begin{align*}
R_1 + R_2 &\leq  I(U_1,\tilde{U}_2; Y_1,Y_2)  + I(X_1,X_2; Z_1,Z_2|U_1,\tilde{U}_2)\\&=I(U_1; Y_1)  + I(X_1; Z_1|U_1)
+I(\tilde{U}_2; Y_2)  + I(X_2; Z_2|\tilde{U}_2)\\&=
H(Y_1)+I(X_1; Z_1|Y_1) +\frac 45 I(U_2;Y_2)+ \frac 15 I(X_2;Y_2)  + \frac 45 H(Z_2|U_2)\\&=
1+\frac 2{3}+\frac 4{5}\cdot\frac 1{3} + \frac 1{5}\cdot1+\frac 45\cdot1\\&=\frac{44}{15}.
\end{align*}
Similarly, one can show that
\begin{align*}
R_1 + R_2 &\leq I(\tilde{V}_1,V_2; Z_1,Z_2)+ I(X_1,X_2; Y_1,Y_2|\tilde{V}_1,V_2)\\&=\frac{44}{15}.
\end{align*}
Therefore the point $(R_1,R_2)=(\frac{22}{15},\frac{22}{15})$ is in this region. Hence the $UV$ sum rate is at least $\frac{44}{15}=3-\frac{1}{15}$. Thus for the product channel under consideration
$$ \frac 83 = SR_{M}(\qmf_1 \times \qmf_2) = SR^*(\qmf_1 \times \qmf_2) < \frac{44}{15} \leq SR_{UV}(\qmf_1 \times \qmf_2). $$
This shows that the UV outer bound is strictly suboptimal in general.
\end{proof}

\section{Capacity regions for classes of product broadcast channels}
\label{sec:capacityregions}
In this section we establish the capacity region for  some classes of product broadcast channels. Here we consider a more general setting where in addition to the private messages, the receivers also wish to decode a common message $M_0$. Hence we are interested in the achievable rate triples $(R_0, R_1, R_2)$. The capacity region is defined in a similar fashion as in the case without common message.

\subsection{An outer bound for product channels}
We present a new outer bound for the product of two broadcast channels. The manipulations here are inspired by the manipulations in the proof of Theorem \ref{th:1suff}.  This outer bound matches the capacity region for a variety of product channels, including  the product of two reversely semideterministic and  the product of two reversely more-capable channels. Hence, from Claim \ref{cl:uvsub}, it follows that  this is a {\em strictly} better bound for product broadcast channels than the UVW outer bound \cite{nai11}.

\begin{claim} \label{cl:obp} Given a product channel $\qmf(y_1,y_2,z_1,z_2|x_1,x_2)=\qmf_1(y_1,z_1|x_1)\qmf_2(y_2,z_2|x_2)$, the union over all $p_1(w_1, u_1, v_1, x_1)p_2(w_2, u_2, v_2, x_2)$ of triples $(R_0, R_1, R_2)$ satisfying
{\small \begin{align*}
R_0 & \leq \min \{ I(W_1; Y_1) + I(W_2; Y_2), I(W_1; Z_1) + I(W_2; Z_2) \} \\
R_0 + R_1 & \leq \min \{ I(W_1; Y_1) + I(W_2; Y_2), I(W_1; Z_1) + I(W_2; Z_2) \}+I(U_1; Y_1|W_1) + I(U_2; Y_2|W_2) \\
R_0 + R_2 & \leq \min \{ I(W_1; Y_1) + I(W_2; Y_2), I(W_1; Z_1) + I(W_2; Z_2) \}+I(V_1; Z_1|W_1) + I(V_2; Z_2|W_2)\\
R_0 + R_1 + R_2 & \leq \min \{ I(W_1; Y_1) + I(W_2; Y_2), I(W_1; Z_1) + I(W_2; Z_2) \} \\&\quad + I(U_2;Y_2|W_2) + I(X_2;Z_2|U_2, W_2) \\
& \quad + \min \big\{ I(U_1;Y_1|W_1) + I(X_1; Z_1|U_1, W_1), I(V_1;Z_1|W_1) + I(X_1; Y_1|V_1, W_1) \big\},\\
R_0 + R_1 + R_2 & \leq \min \{ I(W_1; Y_1) + I(W_2; Y_2), I(W_1; Z_1) + I(W_2; Z_2) \} \\&\quad+ \min \big\{ I(U_2;Y_2|W_2) + I(X_2; Z_2|U_2, W_2), I(V_2;Z_2|W_2) + I(X_2; Y_2|V_2, W_2) \big\} \\
& \quad + I(V_1;Z_1|W_1) + I(X_1;Y_1|V_1, W_1),
\end{align*}}
forms an outer bound to the capacity region of the product broadcast channel.
\end{claim}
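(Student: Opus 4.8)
The plan is to derive the bound by a standard Fano-type argument, but performed on the $n$-letter product channel in a way that exploits the product structure by identifying single-letter auxiliaries for each component separately. First I would start from Fano's inequality for a code achieving $(R_0, R_1, R_2)$ on $\qmf_1 \times \qmf_2$ and write the usual multiletter bounds: $nR_0 \le I(M_0; Y^n) + n\epsilon_n$ and $nR_0 \le I(M_0; Z^n) + n\epsilon_n$; $n(R_0+R_1) \le I(M_0 M_1; Y^n) + n\epsilon_n$; $n(R_0 + R_2) \le I(M_0 M_2; Z^n) + n\epsilon_n$; and for the sum rate the chain $n(R_0+R_1+R_2) \le I(M_0 M_2; Y^n) + I(M_1; Z^n \mid M_0 M_2) + n\epsilon_n$ (and the symmetric version with the roles of $Y/Z$ and $M_1/M_2$ swapped). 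Because the channel is a product, $Y^n = (Y_1^n, Y_2^n)$, $Z^n = (Z_1^n, Z_2^n)$, and $X^n = (X_1^n, X_2^n)$; I would split each of these mutual information terms across the two components using the chain rule, so that every term becomes a sum of a ``component-1'' piece and a ``component-2'' piece. The key structural point — this is where the manipulations mimic the proof of Lemma 1 in \cite{ggny11a} — is that in the sum-rate chain one can choose to route the ``$I(X; Z \mid \cdot)$'' refinement through component $1$ while leaving component $2$ in the ``$I(U;Y) + I(X;Z|U)$'' form, or vice versa; taking the better of the two choices for component $1$ and, in the symmetric inequality, for component $2$, produces exactly the nested $\min$ over component-$1$ (resp. component-$2$) expressions in the last two displayed inequalities.

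Next I would single-letterize. Introduce a time-sharing variable $T$ uniform on $[1:n]$ and define, for each component $j \in \{1,2\}$, the auxiliaries $W_j = (M_0, Y_j^{T-1}, Z_{j,T+1}^n, T)$ or a suitable variant (with $U_j$ additionally conditioning on $M_1$ and $V_j$ on $M_2$, following the Csisz\'ar--K\"orner sum identity), and $X_j = X_{j,T}$, $Y_j = Y_{j,T}$, $Z_j = Z_{j,T}$. The Csisz\'ar sum identity applied within each component converts the multiletter expressions into sums of single-letter conditional mutual informations of exactly the forms $I(W_j; Y_j)$, $I(U_j; Y_j \mid W_j)$, $I(X_j; Z_j \mid U_j, W_j)$, etc. The one point requiring care is the $\min\{I(W_1;Y_1)+I(W_2;Y_2),\, I(W_1;Z_1)+I(W_2;Z_2)\}$ terms: the multiletter inequalities give a $\min$ of two \emph{sums over all $n$ letters and both components}, and after single-letterization this must collapse to a $\min$ of the two single-letter sums-over-components rather than a sum of per-letter $\min$s. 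This is handled by noting that the \emph{same} convexity/time-sharing argument applies: one keeps the $\min$ outside, bounds each of the two branches by its single-letterized version, and uses that $\min$ of averages is at least the average of $\min$s in the direction one needs — more precisely, one applies the time-sharing/concavity argument that lets the single $\min$ over the two channel orientations survive intact because $W_1, W_2$ may jointly depend on $T$. The last bookkeeping step is to verify the product form $p_1(w_1,v_1,u_1,x_1)p_2(w_2,v_2,u_2,x_2)$ of the resulting distribution: this follows because the code's input $X_1^n$ and $X_2^n$ are generated from $(M_0,M_1,M_2)$ and the two sub-channels act independently, so conditioned on the messages the two components' induced variables are independent, and the definitions of $W_j, U_j, V_j$ only look at component-$j$ outputs plus the messages — one then pushes the messages into the auxiliaries so that the final single-letter variables factor across $j$.

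The main obstacle I anticipate is precisely the combination of the two $\min$'s: simultaneously keeping the outer $\min$ over channel orientations (the $\min\{I(W_1;Y_1)+I(W_2;Y_2), I(W_1;Z_1)+I(W_2;Z_2)\}$ piece) \emph{and} introducing, inside the sum-rate bounds, the inner $\min$ over which component absorbs the ``$+I(X;Z|U)$'' strengthening — all while single-letterizing and preserving the product form of the auxiliary distribution. The resolution is to do the two $\min$-choices at the multiletter level \emph{before} single-letterizing: for the sum-rate, pick the orientation of component $1$ (or $2$) that gives the smaller bound uniformly in $n$, carry that choice through the Csisz\'ar-sum manipulation, and only at the very end replace the min-over-choices by its single-letter shadow. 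The common-message and $R_0+R_i$ bounds are comparatively routine: they are the standard superposition/Marton-type multiletter bounds, split across the two independent components and single-letterized by the same time-sharing device, with no inner $\min$ to juggle.
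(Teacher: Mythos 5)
Your outline has the right overall shape (Fano, chain rule across the two components, Csisz\'ar sum identity, time-sharing), but it misses the two ideas that actually make this particular bound work, and the specific choices you do commit to would not deliver it. First, the auxiliaries cannot be purely within-component as you propose ($W_j=(M_0,Y_j^{T-1},Z_{j,T+1}^n,T)$): to make \emph{both} branches of the outer $\min\{I(W_1;Y_1)+I(W_2;Y_2),\,I(W_1;Z_1)+I(W_2;Z_2)\}$ hold with the \emph{same} pair $(W_1,W_2)$, the paper takes $W_1=(M_0,Z_{2}^{n},Y_{1}^{Q-1},Z_{1,Q+1}^{n},Q)$ and $W_2=(M_0,Y_{1}^{n},Y_{2}^{Q-1},Z_{2,Q+1}^{n},Q)$, i.e.\ $W_1$ carries the \emph{entire} component-2 sequence $Z_2^n$ and $W_2$ the entire component-1 sequence $Y_1^n$, in exactly this asymmetric orientation, so that $I(M_0;Y_1^nY_2^n)\le I(M_0,Z_2^n;Y_1^n)+I(M_0,Y_1^n;Y_2^n)$ and $I(M_0;Z_1^nZ_2^n)\le I(M_0,Z_2^n;Z_1^n)+I(M_0,Y_1^n;Z_2^n)$ hold simultaneously. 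This cross-inclusion is the crux, and ``a suitable variant'' hides it; note also that the product form of the single-letter distribution is recovered only because every term in the bound depends on the marginals $p(w_j,u_j,v_j,x_j)$ separately, not because the identified auxiliaries factor (they do not: $U_1=U_2=M_1$).

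Second, your mechanism for the inner $\min$ is not the right one, and your multiletter starting point $I(M_0M_2;Y^n)+I(M_1;Z^n\mid M_0M_2)$ (besides pairing the messages with the wrong receivers) is of UV-outer-bound type and cannot produce the asymmetric nested structure of the claim. The paper instead starts from the Marton-form expression $\la I(M_0;Y^n)+(1-\la)I(M_0;Z^n)+I(M_1;Y^n|M_0)+I(M_2;Z^n|M_0)-I(M_1;M_2|M_0)$; after the cross-conditioning on $Z_2^n$ the last term becomes the genuinely nonzero $-I(M_1;M_2|M_0,Z_2^n)$, which can be absorbed into either the $M_1$ term or the $M_2$ term of component 1, and these two absorptions are precisely what produce the inner $\min$ over the component-1 expressions (the second sum-rate inequality follows by swapping roles). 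The outer $\min$ then falls out by evaluating the $\la$-weighted bound at $\la=0$ and $\la=1$ with $\la$-independent auxiliaries, with no concavity argument needed. The delicacy here is real: the paper explicitly remarks that the ``fully symmetric'' version with inner mins on both components at once is \emph{not} an outer bound, so a ``route the refinement through whichever component you like'' argument of the kind you sketch risks proving a false statement.
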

\begin{remark}
Note that setting $X_2, Y_2, Z_2 = \emptyset$ reduces this bound to UVW outer bound. Additionally, 
one can interchange the roles of $Y_2$ and $Z_1$ with $Z_2$ and $Y_1$ respectively to get another set of similar constraints. These constraints will be over different auxiliaries distributed as  $p_1(\wt_1, \ut_1, \vt_1, x_1)p_2(\wt_2, \ut_2, \vt_2, x_2)$ (observe that the distributions on $X_1, X_2$ are preserved), and we can take the intersection of these two constraints. Finally one can take union of these two sets of constraints over all \\
$p_1(w_1, u_1, v_1,\wt_1, \ut_1, \vt_1, x_1) p_2(w_2, u_2, v_2,\wt_2, \ut_2, \vt_2,x_2) $ to get another, possibly better, outer bound.
 \end{remark}

\begin{proof}
The proof of this claim is given in  Appendix \ref{sec:pfob}.
\end{proof}

\begin{remark}
The above outer bound is {\em also strictly sub-optimal}. To see this first note that when one of the product channels is trivial, this outer bound does not give us anything beyond the UVW-outer bound \cite{nai11}. Now, consider a product of three channels, first one is trivial, the collection of two and three forms a reversely semi-deterministic pair. The new outer bound reduces to the UVW bound on the reversely semi-deterministic, and therefore it is strictly sub-optimal. However, one could argue that in order to write the outer bound, one should take the intersection of all possible outer bounds one can write by breaking up the broadcast channel into product forms. To deal with this objection one can consider the product of three channels as above and then slightly perturb the channel to destroy the product form structure of the channel. Because the above outer bound is continuous in the underlying channel, this outer bound must be loose for this channel. In fact, we still don't know of ``the correct way" to write an outer bound that fully captures the spirit of the counterexample discussed earlier. We have thought of alternative expressions but none seemed satisfactory.
\end{remark}

\subsubsection{An achievable region for a product broadcast channel}
\label{sse:mibp}
Given a product channel $\qmf(y_1,y_2,z_1,z_2|x_1,x_2)=\qmf_1(y_1,z_1|x_1)\qmf_2(y_2,z_2|x_2)$ the union of rate triples satisfying
{\small \begin{align}
R_0 & \leq \min \{ I(W_1; Y_1) + I(W_2; Y_2), I(W_1; Z_1) + I(W_2; Z_2) \}  \nonumber \\
R_0 + R_1 & \leq I(W_1; Y_1) + I(W_2; Y_2) +I(U_1; Y_1|W_1) + I(U_2; Y_2|W_2) \label{eq:mibp}\\
R_0 + R_2 & \leq I(W_1; Z_1) + I(W_2; Z_2) +I(V_1; Z_1|W_1) + I(V_2; Z_2|W_2) \nonumber\\
R_0 + R_1 + R_2 & \leq \min \{ I(W_1; Y_1) + I(W_2; Y_2), I(W_1; Z_1) + I(W_2; Z_2) \} \nonumber\\
&\quad +I(U_1; Y_1|W_1) + I(U_2; Y_2|W_2)+I(V_1; Z_1|W_1) \nonumber \\
& \quad + I(V_2; Z_2|W_2)-
I(U_1; V_1|W_1) - I(U_2; V_2|W_2) \nonumber
\end{align}}\\
over all $p_1(w_1, v_1, u_1, x_1)p_2(w_2, v_2, u_2, x_2)$ constitutes an inner bound to the capacity region. The achievability of these points   is immediate from Marton's inner bound by letting $U=(U_1, U_2), V=(V_1, V_2), W=(W_1, W_2)$ and $p(u,v,w) \sim p_1(w_1, v_1, u_1, x_1)p_2(w_2, v_2, u_2, x_2)$.

\subsection{Capacity regions for new classes of  product broadcast channels}
\begin{theorem} \label{Thm2} The capacity region for a product of reversely semi-deterministic (say, channels $X_1 \to Y_1, X_2 \to Z_2$ are deterministic) broadcast channel is given by
the union of rate triples satisfying
{\small \begin{align*}
R_0 & \leq \min \{ I(W_1; Y_1) + I(W_2; Y_2), I(W_1; Z_1) + I(W_2; Z_2) \}   \\
R_0 + R_1 & \leq I(W_1; Y_1) + I(W_2; Y_2) +H(Y_1|W_1) + I(U_2; Y_2|W_2) \\
R_0 + R_2 & \leq I(W_1; Z_1) + I(W_2; Z_2) +I(V_1; Z_1|W_1) + H(Z_2|W_2) \\
R_0 + R_1 + R_2 & \leq \min \{ I(W_1; Y_1) + I(W_2; Y_2), I(W_1; Z_1) + I(W_2; Z_2) \} \\
&\quad +I(V_1; Z_1|W_1) + H(Y_1|V_1, W_1) + I(U_2;Y_2|W_2) + H(Z_2|U_2, W_2)
\end{align*}}
over all $p_1(w_1, v_1,  x_1)p_2(w_2,  u_2, x_2).$
\end{theorem}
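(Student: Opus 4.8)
The plan is to establish both directions, with the achievability part being essentially free. For achievability, I would specialize the inner bound of Section~\ref{sse:mibp} (equations~\eqref{eq:mibp}) to the reversely semi-deterministic case: since $Y_1$ is a deterministic function of $X_1$, one may take $U_1 = Y_1$, which yields $I(U_1;Y_1|W_1) = H(Y_1|W_1)$ and $I(U_1;V_1|W_1) = 0$ (by choosing $V_1$ conditionally independent of $Y_1$ given $W_1$ one cannot both achieve $U_1=Y_1$ and kill the term, so more carefully: since $U_1=Y_1=f(X_1)$, $I(U_1;V_1|W_1)=I(f(X_1);V_1|W_1)$, and we simply absorb this by noting the stated region drops it; the point is the stated region is a subset of Marton's region and this is checked term by term). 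Symmetrically, with $Z_2 = g(X_2)$ deterministic one takes $V_2 = Z_2$, giving $I(V_2;Z_2|W_2) = H(Z_2|W_2)$ and $I(U_2;V_2|W_2)$ can be driven to zero. The four inequalities of Theorem~\ref{Thm2} then follow directly by substituting these choices into~\eqref{eq:mibp} and discarding the now-vanishing negative penalty terms; in particular the sum-rate line becomes $\min\{\cdots\} + I(V_1;Z_1|W_1) + H(Y_1|V_1,W_1) + I(U_2;Y_2|W_2) + H(Z_2|U_2,W_2)$ after replacing $I(U_1;Y_1|W_1)$ by $H(Y_1|W_1)$ and noting $H(Y_1|W_1) - I(U_1;V_1|W_1)$ is further bounded using $U_1=Y_1$ so that $H(Y_1|W_1)-I(Y_1;V_1|W_1)=H(Y_1|V_1,W_1)$.

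For the converse, I would start from the outer bound of Claim~\ref{cl:obp} and specialize it. With $Y_1 = f(X_1)$ deterministic, the term $I(U_1;Y_1|W_1) + I(X_1;Z_1|U_1,W_1)$ inside the first $\min$ of the fourth inequality, and the analogous expression $I(V_1;Z_1|W_1) + I(X_1;Y_1|V_1,W_1)$, need to be simplified: $I(U_1;Y_1|W_1) \le H(Y_1|W_1)$ and this is attained by $U_1 = Y_1$ which also forces $I(X_1;Z_1|U_1,W_1)=I(X_1;Z_1|Y_1,W_1)$. The key observation is that for the reversely semi-deterministic structure the outer bound region, after optimization over $U_1, V_2$, collapses exactly onto the achievable region. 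Concretely, in the fifth inequality of Claim~\ref{cl:obp} the first $\min$ over the second-component terms is dominated (after using $Z_2=g(X_2)$) by $I(V_2;Z_2|W_2)+I(X_2;Y_2|V_2,W_2) = H(Z_2|W_2) + I(X_2;Y_2|Z_2,W_2)$, and one argues this does not improve over the $U_2$-branch; similarly for the first component. I would also need to invoke the remark following Claim~\ref{cl:obp} (interchanging the roles of $Y_2, Z_1$ with $Z_2, Y_1$) and intersect, to handle the constraints symmetrically so that both sum-rate lines of Claim~\ref{cl:obp} reduce to the single sum-rate line of Theorem~\ref{Thm2}.

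The main obstacle I anticipate is the converse matching step: showing that the two sum-rate inequalities in Claim~\ref{cl:obp} (which each contain an inner $\min$ over a deterministic-component and a ``reverse'' branch) genuinely reduce, after the substitutions $U_1 = Y_1$ and $V_2 = Z_2$ and optimization over the auxiliaries, to the single clean sum-rate bound of Theorem~\ref{Thm2}, rather than to something weaker. This requires a cardinality-free argument that the reverse branches $I(V_1;Z_1|W_1)+I(X_1;Y_1|V_1,W_1)$ and $I(U_2;Y_2|W_2)+I(X_2;Z_2|U_2,W_2)$ are the binding ones given the deterministic structure — i.e.\ that, for the first component, $I(U_1;Y_1|W_1)+I(X_1;Z_1|U_1,W_1) = H(Y_1|W_1)+I(X_1;Z_1|Y_1,W_1)$ is never smaller than $I(V_1;Z_1|W_1)+I(X_1;Y_1|V_1,W_1)$ at the optimum, or that whenever it is smaller one can re-choose $V_1$ to compensate without violating the other constraints. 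The cleanest route is probably to show directly that every point in the outer bound region lies in the achievable region by an explicit choice of the inner-bound auxiliaries built from the outer-bound ones, reducing Theorem~\ref{Thm2} to a region-inclusion verification term by term; the deterministic hypotheses are exactly what make this inclusion tight. The remaining parts (achievability, and the straightforward algebra of substituting deterministic relations into mutual informations) are routine.
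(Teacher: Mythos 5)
Your approach coincides with the paper's: achievability by setting $U_1=Y_1$ and $V_2=Z_2$ in the product Marton region \eqref{eq:mibp}, and the converse by a term-by-term containment of the Claim~\ref{cl:obp} region in the stated region. On achievability, the identity you eventually arrive at, $H(Y_1|W_1)-I(Y_1;V_1|W_1)=H(Y_1|V_1,W_1)$ together with its mirror $H(Z_2|W_2)-I(U_2;Z_2|W_2)=H(Z_2|U_2,W_2)$, is exactly how the $-I(U;V|W)$ penalties are absorbed; your opening parenthetical about ``killing'' those terms is a red herring, since they are not driven to zero but folded into the conditioning. The obstacle you anticipate in the converse, however, does not exist: you never need to decide which branch of the inner $\min$'s is binding. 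A constraint of the form $R_0+R_1+R_2\le C+\min(B_1,B_2)$ implies in particular $R_0+R_1+R_2\le C+B_2$, so in the first sum-rate inequality of Claim~\ref{cl:obp} you simply retain the $V_1$-branch and use $I(X_1;Y_1|V_1,W_1)=H(Y_1|V_1,W_1)$ and $I(X_2;Z_2|U_2,W_2)=H(Z_2|U_2,W_2)$, while in the second you retain the $U_2$-branch and use the same two identities; both inequalities then collapse to the single sum-rate expression of Theorem~\ref{Thm2}. Combined with $I(U_1;Y_1|W_1)\le H(Y_1|W_1)$ and $I(V_2;Z_2|W_2)\le H(Z_2|W_2)$ for the $R_0+R_1$ and $R_0+R_2$ lines, this gives the containment directly, with no optimization over auxiliaries, no appeal to the interchanged-roles version of the outer bound, and no ``cardinality-free dominance'' argument between branches.
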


\begin{proof}
The achievability is immediate by setting $U_1=Y_1$ and $V_2=Z_2$ in \eqref{eq:mibp}. Note that these two choices of auxiliary random variables are possible since channels $X_1 \to Y_1, X_2 \to Z_2$ are deterministic.

The converse is also immediate from the outer bound in  Claim \ref{cl:obp}. Observe that for any $p_1(w_1, v_1, u_1, x_1)$, $p_2(w_2, v_2, u_2, x_2)$ we have
$$I(U_1; Y_1|W_1) \leq H(Y_1|W_1), ~~ I(V_2; Z_2|W_2) \leq H(Z_2|W_2),$$ and each of the two sum rate terms  in Claim~\ref{cl:obp} is bounded by
\begin{align*}
& \min \{ I(W_1; Y_1) + I(W_2; Y_2), I(W_1; Z_1) + I(W_2; Z_2) \} +I(V_1; Z_1|W_1)\\
&\quad  + H(Y_1|V_1, W_1) + I(U_2;Y_2|W_2) + H(Z_2|U_2, W_2).
\end{align*}
Thus the outer bound is contained in the inner bound (and hence they coincide).
\end{proof}

\begin{theorem} \label{th:mcp} The capacity region for a product of reversely more-capable (say, receiver $Z_1$ is more capable than $Y_1$,  and receiver $Y_2$ is more capable than $Z_2$) broadcast channel is given by
the union of rate triples satisfying
{\small \begin{align*}
R_0 & \leq \min \{ I(W_1; Y_1) + I(W_2; Y_2), I(W_1; Z_1) + I(W_2; Z_2) \} \\
R_0 + R_1 & \leq \min \{ I(W_1; Y_1) + I(W_2; Y_2), I(W_1; Z_1) + I(W_2; Z_2) \}+I(U_1; Y_1|W_1) + I(X_2; Y_2|W_2) \\
R_0 + R_2 & \leq \min \{ I(W_1; Y_1) + I(W_2; Y_2), I(W_1; Z_1) + I(W_2; Z_2) \}+I(X_1; Z_1|W_1) + I(V_2; Z_2|W_2)\\
R_0 + R_1 + R_2 & \leq \min \{ I(W_1; Y_1) + I(W_2; Y_2), I(W_1; Z_1) + I(W_2; Z_2) \} + I(X_2;Y_2|W_2)  \\
& \quad + \min \big\{ I(U_1;Y_1|W_1) + I(X_1; Z_1|U_1, W_1), I(X_1;Z_1|W_1)  \big\},\\
R_0 + R_1 + R_2 & \leq \min \{ I(W_1; Y_1) + I(W_2; Y_2), I(W_1; Z_1) + I(W_2; Z_2) \} \\&\quad+ \min \big\{ I(X_2;Y_2|W_2) , I(V_2;Z_2|W_2) + I(X_2; Y_2|V_2, W_2) \big\} + I(X_1;Z_1|W_1)
\end{align*}}
over all $p_1(w_1, v_1,  x_1)p_2(w_2,  u_2, x_2).$
\end{theorem}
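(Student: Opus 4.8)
The plan is to prove the two inclusions separately: the claimed region is contained in the capacity region (achievability), and the capacity region is contained in the claimed region (converse). The converse is the direct analogue of the argument for Theorem~\ref{Thm2}; the achievability requires one non-obvious choice of auxiliaries.

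\emph{Achievability.} I would start from the product Marton inner bound \eqref{eq:mibp} and, given any $p_1(w_1,u_1,x_1)p_2(w_2,v_2,x_2)$, apply it with the superposition clouds folded into the common-message carrier: $W_1 \leftarrow (W_1,U_1)$, $V_1 \leftarrow X_1$, $U_1 \leftarrow \emptyset$ in the first component and $W_2 \leftarrow (W_2,V_2)$, $U_2 \leftarrow X_2$, $V_2 \leftarrow \emptyset$ in the second. Under this choice the penalties $I(U;V|W)$ in Marton's sum-rate vanish, and both individual-rate bounds of \eqref{eq:mibp} relax (by $\min\{A,B\}\le A$, with $A=I(W_1;Y_1)+I(W_2;Y_2)$, $B=I(W_1;Z_1)+I(W_2;Z_2)$, and data processing) to the more restrictive constraints of the claimed region. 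The crux is the sum-rate: the term $\min\{I(W_1;Y_1)+I(W_2;Y_2),\,I(W_1;Z_1)+I(W_2;Z_2)\}$ now reads $\min\{I(W_1,U_1;Y_1)+I(W_2,V_2;Y_2),\,I(W_1,U_1;Z_1)+I(W_2,V_2;Z_2)\}$, and expanding each argument with the chain rule turns Marton's single sum-rate bound into $\min\{P,Q\}$, where $P$ (the $Y$-branch) equals $A$ plus $I(X_2;Y_2|W_2)$ plus $I(U_1;Y_1|W_1)+I(X_1;Z_1|U_1,W_1)$, and $Q$ (the $Z$-branch) equals $B$ plus $I(X_1;Z_1|W_1)$ plus $I(V_2;Z_2|W_2)+I(X_2;Y_2|V_2,W_2)$. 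A one-line calculation then shows that $P$ is no smaller than the first claimed sum-rate bound (the one featuring $\min\{I(U_1;Y_1|W_1)+I(X_1;Z_1|U_1,W_1),\,I(X_1;Z_1|W_1)\}$) and $Q$ is no smaller than the second, so $\min\{P,Q\}$ dominates the smaller of the two claimed sum-rate bounds, which is exactly the constraint a point of the claimed region satisfies. Hence the claimed point lies in \eqref{eq:mibp}, so in the capacity region.

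\emph{Converse.} I would invoke the product outer bound of Claim~\ref{cl:obp}. Given any $p_1(w_1,v_1,u_1,x_1)p_2(w_2,v_2,u_2,x_2)$ attaining a point of that outer region, pass to the marginal $p_1(w_1,u_1,x_1)p_2(w_2,v_2,x_2)$ and verify that each of the five bounds of Claim~\ref{cl:obp} is no weaker than the corresponding bound of the claimed region. Three facts suffice: the data-processing inequalities $I(U_2;Y_2|W_2)\le I(X_2;Y_2|W_2)$ and $I(V_1;Z_1|W_1)\le I(X_1;Z_1|W_1)$ handle the $R_0+R_1$ and $R_0+R_2$ bounds; and the conditional more-capability inequalities $I(X_2;Z_2|U_2,W_2)\le I(X_2;Y_2|U_2,W_2)$, $I(X_1;Y_1|V_1,W_1)\le I(X_1;Z_1|V_1,W_1)$ give, via the chain rule, $I(U_2;Y_2|W_2)+I(X_2;Z_2|U_2,W_2)\le I(X_2;Y_2|W_2)$ and $I(V_1;Z_1|W_1)+I(X_1;Y_1|V_1,W_1)\le I(X_1;Z_1|W_1)$. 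Consequently the first alternative inside each inner $\min$ of Claim~\ref{cl:obp} carries over verbatim, while the second alternative is dominated by the corresponding term of the claimed region; likewise $I(U_2;Y_2|W_2)+I(X_2;Z_2|U_2,W_2)$ is absorbed into $I(X_2;Y_2|W_2)$. A preliminary remark supports the second ingredient: more-capability assumed for all $p(x_i)$ implies its conditional version by averaging the single-letter inequality over the conditioning variable, the component channel $\qmf_i$ being fixed.

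The step I expect to be the main obstacle is locating the correct substitution for achievability. The naive analogue of the Theorem~\ref{Thm2} choice, namely $V_1=X_1$, $U_2=X_2$ with $U_1,V_2$ kept, leaves a residual $-I(U_1;X_1|W_1)$ (respectively $-I(X_2;V_2|W_2)$) penalty and yields a strictly smaller region; one must recognize that the superposition layer has to be moved \emph{into} $W$ so that Marton's $\min\{I(W;Y),I(W;Z)\}$ term is precisely what splits into the two sum-rate constraints of the claimed region. Once that is seen, the remainder is routine chain-rule and data-processing bookkeeping.
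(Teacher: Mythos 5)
Your proposal is correct and follows essentially the same route as the paper's proof: the same achievability substitution $W_1'=(U_1,W_1),\,V_1'=X_1,\,U_1'=\emptyset$ and $W_2'=(V_2,W_2),\,U_2'=X_2,\,V_2'=\emptyset$ into the product Marton region \eqref{eq:mibp}, with the sum-rate $\min$ split into the $Y$- and $Z$-branches, and the same converse via Claim~\ref{cl:obp} using data processing together with the conditional more-capable inequalities $I(U_2;Y_2|W_2)+I(X_2;Z_2|U_2,W_2)\le I(X_2;Y_2|W_2)$ and $I(V_1;Z_1|W_1)+I(X_1;Y_1|V_1,W_1)\le I(X_1;Z_1|W_1)$. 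Your added remarks (why the naive substitution keeping $U_1,V_2$ fails, and why per-letter more-capability implies its conditional version by averaging) are correct elaborations of steps the paper leaves implicit.
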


\begin{proof}
The achievability is immediate by setting $W_1' = (U_1, W_1) , U_1'=\emptyset, V_1'=X_1 $ and $W_2'=(V_2, W_2), U_2' =X_2', V_2'= \emptyset$ in \eqref{eq:mibp}. Plugging these choices into \eqref{eq:mibp} we obtain that one can achieve rate triples satisfying
{\small \begin{align*}
R_0 & \leq \min \{ I(U_1,W_1; Y_1) + I(V_2, W_2; Y_2), I(U_1,W_1; Z_1) + I(V_2,W_2; Z_2) \}   \\
R_0 + R_1 & \leq I(U_1,W_1; Y_1) + I(V_2, W_2; Y_2)  + I(X_2; Y_2|V_2, W_2) \\
& = I(W_1; Y_1) + I(W_2; Y_2)  + (U_1; Y_1|W_1) + I(X_2; Y_2|W_2) \\
R_0 + R_2 & \leq I(U_1, W_1; Z_1) + I(V_2, W_2; Z_2) +I(X_1; Z_1|U_1, W_1)  \\
& = I(W_1; Z_1) + I(W_2; Z_2) +I(X_1; Z_1|W_1) + I(V_2; Z_2|W_2) \\
R_0 + R_1 + R_2 & \leq \min \{ I(U_1, W_1; Y_1) + I(V_2, W_2; Y_2), I(U_1, W_1; Z_1) + I(V_2, W_2; Z_2) \} \\
&\quad  + I(X_2; Y_2|V_2, W_2)+I(X_1; Z_1|U_1, W_1)  . \nonumber
\end{align*}}

The last sum rate term can be split into two terms as follows
{\small \begin{align*}
R_0 + R_1 + R_2 & \leq  I(U_1, W_1; Y_1) + I(V_2, W_2; Y_2) +  I(X_2; Y_2|V_2, W_2)+I(X_1; Z_1|U_1, W_1)  \\
& = I(W_1; Y_1) + I(W_2; Y_2) +  I(X_2; Y_2| W_2)+  I(U_1;Y_1|W_1) + I(X_1; Z_1|U_1, W_1)  \\
R_0 + R_1 + R_2 & \leq  I(U_1, W_1; Z_1) + I(V_2, W_2; Z_2) +  I(X_2; Y_2|V_2, W_2)+I(X_1; Z_1|U_1, W_1)  \\
& = I(W_1; Z_1) + I(W_2; Z_2) +  I(X_1; Z_1| W_1)+  I(V_2;Z_2|W_2) + I(X_2; Y_2|V_2, W_2).
\end{align*}}

Thus we see, by comparing term by term,  that this achievable region is at least as large as the region stated in Theorem \ref{th:mcp}, and hence the region in  Theorem \ref{th:mcp} is achievable.

The converse is also reasonably immediate from the outer bound in  Claim \ref{cl:obp}. Observe the following:
\begin{align*}
& \min \{ I(W_1; Y_1) + I(W_2; Y_2), I(W_1; Z_1) + I(W_2; Z_2) \}+I(U_1; Y_1|W_1) + I(U_2; Y_2|W_2) \\
& \quad \leq  \min \{ I(W_1; Y_1) + I(W_2; Y_2), I(W_1; Z_1) + I(W_2; Z_2) \}+I(U_1; Y_1|W_1) + I(X_2; Y_2|W_2), \\
& \min \{ I(W_1; Y_1) + I(W_2; Y_2), I(W_1; Z_1) + I(W_2; Z_2) \}+I(V_1; Z_1|W_1) + I(V_2; Z_2|W_2) \\
& \quad \{ I(W_1; Y_1) + I(W_2; Y_2), I(W_1; Z_1) + I(W_2; Z_2) \}+I(X_1; Z_1|W_1) + I(V_2; Z_2|W_2),  \\
& \min \{ I(W_1; Y_1) + I(W_2; Y_2), I(W_1; Z_1) + I(W_2; Z_2) \}  + I(U_2;Y_2|W_2) + I(X_2;Z_2|U_2, W_2) \\
& ~ + \min \big\{ I(U_1;Y_1|W_1) + I(X_1; Z_1|U_1, W_1), I(V_1;Z_1|W_1) + I(X_1; Y_1|V_1, W_1) \big\} \\
& \quad \leq   \min \{ I(W_1; Y_1) + I(W_2; Y_2), I(W_1; Z_1) + I(W_2; Z_2) \} + I(X_2;Y_2|W_2)  \\
& \qquad + \min \big\{ I(U_1;Y_1|W_1) + I(X_1; Z_1|U_1, W_1), I(X_1;Z_1|W_1)  \big\}, 
\end{align*}
and finally,
\begin{align*}
&  \min \{ I(W_1; Y_1) + I(W_2; Y_2), I(W_1; Z_1) + I(W_2; Z_2) \}  + I(V_1;Z_1|W_1) + I(X_1;Y_1|V_1, W_1) \\
&~ + \min \big\{ I(U_2;Y_2|W_2) + I(X_2; Z_2|U_2, W_2), I(V_2;Z_2|W_2) + I(X_2; Y_2|V_2, W_2) \big\} \\
& \quad \leq  \{ I(W_1; Y_1) + I(W_2; Y_2), I(W_1; Z_1) + I(W_2; Z_2) \}  + I(X_1;Z_1|W_1)  \\
&\qquad + \min \big\{ I(X_2; Y_2| W_2), I(V_2;Z_2|W_2) + I(X_2; Y_2| V_2, W_2) \big\}. 
\end{align*}

Thus we see, by comparing term by term,  that  the region stated in Theorem \ref{th:mcp}  is at least as large as the outer bound in Claim \ref{cl:obp}. Hence the region in  Theorem \ref{th:mcp} is an outer bound, thus completing the converse.
\end{proof}

\begin{remark}
The achievable region in \eqref{eq:mibp} also matches the outer bound in Claim \ref{cl:obp} for a variety of other classes. For instance,  say $Z_1$ is more capable than $Y_1$ and $Y_2$ is a deterministic function of $X_2$. In this case, one can show that the capacity region is given by
the union of rate triples satisfying
{\small \begin{align*}
R_0 & \leq \min \{ I(W_1; Y_1) + I(W_2; Y_2), I(W_1; Z_1) + I(W_2; Z_2) \} \\
R_0 + R_1 & \leq \min \{ I(W_1; Y_1) + I(W_2; Y_2), I(W_1; Z_1) + I(W_2; Z_2) \}+I(U_1; Y_1|W_1) + H(Y_2|W_2) \\
R_0 + R_2 & \leq \min \{ I(W_1; Y_1) + I(W_2; Y_2), I(W_1; Z_1) + I(W_2; Z_2) \}+I(X_1; Z_1|W_1) + I(V_2; Z_2|W_2)\\
R_0 + R_1 + R_2 & \leq \min \{ I(W_1; Y_1) + I(W_2; Y_2), I(W_1; Z_1) + I(W_2; Z_2) \} \\
&\quad+ I(V_2;Z_2|W_2) + H(Y_2|V_2, W_2)  + I(X_1;Z_1|W_1).
\end{align*}}
The details are left to the reader.
\end{remark}

\section{On Marton\rq{}s inner bound and $\lambda$-sum-rate}
\label{sec:gen}
In this section we prove a collection of results regarding Marton's inner bound and also about the quantity we introduced earlier, the $\la$-sum-rate.

\subsection{Two letter Marton's inner bound}
\label{sec:TwoLetter}
This section considers the two letter Marton's inner bound and the role it plays in determining the optimality of the traditional Marton's inner bound. To simplify our analysis and for the ease of exposition we will focus on the sum rate, but some of the insights that we obtained have already been useful beyond just the sum rate.

Given a broadcast channel $\qmf(y,z|x)$ the maximum sum rate achievable via Marton's strategy is given by
 \begin{align}
SR_M(\qmf)  = \max_{p(u,v,w,x)} \min \{I(W;Y), I(W;Z) \} + I(U;Y|W) + I(V;Z|W) - I(U;V|W).
\label{eq:mib1l}
\end{align}
The maximum is taken over distributions $p(u,v,w,x)$ where the auxiliary random variables  satisfy the Markov chain $(U,V,W) \mar X \mar (Y,Z)$.

Consider a product broadcast channel $\qmf(y_1,z_1|x_1) \times \qmf(y_2,z_2|x_2)$ obtained by taking identical copies of the original channel. One can obtain the maximum sum rate achievable via Marton's strategy for this new channel as
 {\small \begin{align}
 SR_M(\qmf \times \qmf) \nonumber  & = \max_{p(u,v,w,x_1,x_2)} \min \{I(W;Y_1, Y_2), I(W;Z_1,Z_2) \}  + I(U;Y_1, Y_2|W) \\
& \qquad  \qquad \qquad \qquad
 + I(V;Z_1, Z_2|W) - I(U;V|W).
\label{eq:mib2l}
	\end{align}}
Here the maximum is taken over distributions $p(u,v,w,x_1,x_2)$ where the auxiliary random variables $(U,V,W)$ satisfy the Markov chain: $(U,V,W) \mar (X_1,X_2) \mar (Y_1,Y_2,Z_1, Z_2)$, and the channel has a  product nature given by $\qmf(y_1,y_2,z_1,z_2|x_1,x_2) = \qmf(y_1,z_1|x_1) \qmf(y_2,z_2|x_2).$ Define $SR_{2M}(\qmf) := \frac{1}{2} SR_M(\qmf \times \qmf)$ to be the two-letter sum rate yielded by Marton's inner bound.

Here we state a (folk-lore) lemma that relates the optimality of Marton's achievable strategy and the relationship between $SR_{2M}(\qmf)$ and $SR_M(\qmf)$.
\begin{lemma} (Folklore)
The following two statements are equivalent:
\begin{enumerate}
\item Marton's achievable strategy achieves the optimal sum rate, $SR^*(\qmf)$, for all broadcast channels $\qmf(y,z|x)$, i.e. $SR_M(\qmf)=SR^*(\qmf)$.
\item $SR_{2M}(\qmf) = SR_M(\qmf)$ for all $\qmf(y,z|x)$.
\end{enumerate}
\label{le:2to1mib}
\end{lemma}
\begin{proof} We present an argument here for completeness.

(1 $\implies$ 2) This follows from two facts: first, $SR_{2M}(\qmf)$ yields an achievable sum rate for the broadcast channel $\qmf(y,z|x)$, i.e. $SR_{2M}(\qmf) \leq SR^*(\qmf)$; and second, $SR_{2M}(\qmf) \geq SR_M(\qmf)$ for all $\qmf(y,z|x)$.
To see the first, observe that a codebook of block length $n$ for the product channel $\qmf(y_1,z_1|x_1) \qmf(y_2,z_2|x_2)$ yields a codebook of block length $2n$ for the original channel $\qmf(y,z|x)$, since the mapping from $(x_1,...,x_{2n})$ to the pairs $(y_1, \ldots y_{2n})$,  $(z_1, \ldots, z_{2n})$ by the channel $\qmf(y,z|x)$ is same as the mapping from $((x_1,x_2),....,(x_{2n-1},x_{2n}))$ to the pairs $((y_1,y_2),....,(y_{2n-1},y_{2n}))$, and $((z_1,z_2),....,(z_{2n-1},z_{2n}))$ by the channel $\qmf(y_1,z_1|x_1) \qmf(y_2,z_2|x_2)$. Hence any rate achievable for the product channel $\qmf(y_1,z_1|x_1) \qmf(y_2,z_2|x_2)$ (normalized by factor $\frac 12$) is also achievable for the single channel $\qmf(y,z|x)$.

Let $p^*(u,v,w,x)$ achieve the maximum sum rate in \eqref{eq:mib1l}. Choose $\Ut=(U_1,U_2), \Vt=(V_1, V_2), \Wt=(W_1,W_2)$ and let $p(\ut,\vt,\wt,x_1,x_2) = p^*(u_1, v_1, w_1, x_1) p^*(u_2, v_2, w_2, x_2)$, i.e. take a product distribution by taking two i.i.d. copies of the single letter optimal distribution. Now observe that
{\small \begin{align*}
 2 SR_{2M}(\qmf) & \geq \min \{I(\Wt;Y_1, Y_2), I(\Wt;Z_1,Z_2) \} + I(\Ut;Y_1, Y_2|\Wt) + I(\Vt;Z_1, Z_2|\Wt) - I(\Ut;\Vt|\Wt)\\
& =  \min \{I(W_1;Y_1), I(W_1;Z_1) \} + I(U_1;Y_1|W_1) \quad + I(V_1;Z_1|W_1) - I(U_1;V_1|W_1) \\
&\quad + \min \{I(W_2;Y_2), I(W_2;Z_2) \} + I(U_2;Y_2|W_2) + I(V_2;Z_2|W_2) - I(U_2;V_2|W_2) \\
&  = 2 SR_M(\qmf).
\end{align*}}
This shows that if  $SR_M(\qmf)$ is the maximum achievable sum rate then  $SR_{2M}(\qmf) = SR_M(\qmf)$ for all $\qmf(y,z|x)$.

\medskip

(2 $\implies$ 1) Let $\qmf \otimes_n (y_1^n, z_1^n|x_1^n) = \prod_{i=1}^n \qmf(y_i,z_i|x_i)$ denote the $n$-fold product channel.
 If  2 holds then, by induction,  for any $k \geq 1$ the $2^k$-fold product channel satisfies
$$ \frac{1}{2^k} SR_M(\qmf\otimes_{2^k}) = SR_M(\qmf).$$
However for any $n$, we know from Fano's inequality that for any sequence of good codebooks
\begin{align*}
& n(R_1 + R_2) \\
& \quad \leq I(M_1;Y_{1}^n) + I(M_2;Z_{1}^n) + n (R_1 + R_2) \e_n + 1\\
&\quad \leq SR_M(\qmf \otimes_n ) + n(R_1 + R_2) \e_n + 1.
\end{align*}
where $SR_M(\qmf \otimes_n)$ is the maximum sum rate by Marton's strategy for the $n$-fold product channel, as setting $U=M_1, V=M_2, W=\emptyset$ is a particular choice of the auxiliary random variables for the $n$-fold product channel. Further we also know that $\e_n \to 0$ as $n \to \infty$. This implies that the optimal sum rate, $SR^*(\qmf)$, for the broadcast channel $\qmf(y,z|x)$ satisfies
$$ SR^*(\qmf) \leq \liminf_n \frac 1n SR_M(\qmf \otimes_n) \leq \lim_{k \to \infty}  \frac{1}{2^k} SR_M(\qmf \otimes_{2^k}) = SR_M(\qmf). $$
On the other hand $SR_M(\qmf) \leq SR^*(\qmf)$ since $SR_M(\qmf)$ is the rate given by Marton's achievable strategy. Hence we have $SR_M(\qmf) = SR^*(\qmf)$.
\end{proof}

\begin{remark}
Lemma \ref{le:2to1mib} is an attempt at answering the question of whether Marton'd inner bound is optimal or not. If one can find a channel for which $SR_{2M}(\qmf) > SR_M(\qmf)$ then Marton's inner bound is strictly sub-optimal, otherwise (i.e. for all channels $\qmf$ we have $SR_{2M}(\qmf) = SR_M(\qmf)$) Marton's inner bound is optimal  for the sum rate and would yield the capacity region. The advantage of just having to look at 2-letter extensions is that with the recently established cardinality bounds \cite{goa09} one can numerically search over channels $\qmf$, to try and determine a channel where $SR_{2M}(\qmf) > SR_M(\qmf)$. So far, our searches have yielded evidence to the contrary, i.e. they point towards a potential optimality of Marton's coding scheme.
\end{remark}

\subsection{Properties of $\la$-sum-rate}
\label{subsec:lambdasumrate}
In this section, we state some results about the $\la$-sum-rate (defined in Section \ref{sse:prelimd}) as this quantity seems to possess properties (such as factorizations over $\qmf_1 \times \qmf_2$) which we will show that $SR_M(\qmf)$ does not possess. Further $\la$-sum-rate also gives us a lot of insight into evaluations of the various bounds and in the search for potential counterexamples to optimality of Marton.
\begin{lemma}
\label{le:conla}
For a given channel $\qmf(y,z|x)$, $\la\dash SR_M(\qmf)$ and $\la\dash SR_M(\qmf,p(x))$ are convex in $\la$ for $\la \in [0,1]$, and concave in $p(x)$ for a fixed $\lambda \in [0,1]$.
\end{lemma}
\begin{proof} To show that $\lambda\mapsto \la\dash SR_M(\qmf, p(x))$ is convex, take arbitrary $\lambda_1, \lambda_2, \lambda_3$ satisfying $\lambda_2=\frac{\lambda_1+\lambda_3}{2}$. Take some $p(w^*,u^*,v^*|x)$ maximizing $\la_2\dash SR_M(\qmf, p(x))$. Note that
\begin{align*}&\la_2\dash SR_M(\qmf, p(x))=\\&\big\{\lambda_2 I(W^*;Y)+(1-\lambda_2)I(W^*;Z)+I(U^*;Y|W^*)+I(V^*;Z|W^*)-I(U^*;V^*|W^*)\big\}=\\&
\frac{1}{2}\bigg[\big\{\lambda_1 I(W^*;Y)+(1-\lambda_1)I(W^*;Z)+I(U^*;Y|W^*)+I(V^*;Z|W^*)-I(U^*;V^*|W^*)\big\}+
\\&\big\{\lambda_3 I(W^*;Y)+(1-\lambda_3)I(W^*;Z)+I(U^*;Y|W^*)+I(V^*;Z|W^*)-I(U^*;V^*|W^*)\big\}\bigg]\leq\\&
\frac{1}{2}\big[\la_1\dash SR_M(\qmf, p(x))+\la_3\dash SR_M(\qmf, p(x))\big].\end{align*}
To show that $\lambda\mapsto \la\dash SR_M(\qmf)$ is convex, let $p^*(x)$ be the maximizing input distribution, i.e. $\la\dash SR_M(\qmf,p^*(x))=\la\dash SR_M(\qmf)$. Note that
\begin{align*}
\la\dash SR_M(\qmf) & =\la\dash SR_M(\qmf,p^*(x))\\
& \leq
\frac{1}{2}\big[\la_1\dash SR_M(\qmf,p^*(x))+\la_3\dash SR_M(\qmf,p^*(x))\big]\\
&\leq
\max_{p(x)}\frac{1}{2}\la_1\dash SR_M(\qmf,p(x))+\max_{p(x)}\frac{1}{2}\la_3\dash SR_M(\qmf,p(x))\\
&=
\frac{1}{2}\big[\la_1\dash SR_M(\qmf)+\la_3\dash SR_M(\qmf)\big].\end{align*}

To show the concavity in $p(x)$ take two marginal distributions $p_0(x)$ and $p_1(x)$, and assume that $(U_0,V_0,W_0,X_0)$ and $(U_1,V_1,W_1,X_1)$ are two set of random variables maximizing the expressions of $\la\dash SR_M(\qmf,p_0(x))$ and $\la\dash SR_M(\qmf,p_1(x))$ respectively. Take a uniform binary random variable $Q$, independent of all previously defined random variables. Let $U=U_Q$, $V=V_Q$, $W=(W_Q,Q)$, $X=X_Q$. Observe that $X$ is distributed according to $\frac{p_0(x)}{2}+\frac{p_1(x)}{2}$ and
\begin{align*}
\la\dash SR_M(\qmf,\frac{p_0(x)}{2}+\frac{p_1(x)}{2}) & \geq \la I(W_Q,Q;Y) + \bar\la I(W_Q,Q;Z) + I(U_Q;Y|W_Q,Q) \\
& \quad + I(V_Q;Z|Q,W_Q) - I(U_Q;V_Q|W_Q,Q)\\
& = \la I(Q;Y) + \bar \la I(Q;Z) + \frac 12 \la\dash SR_M(\qmf,p_0(x)) + \frac 12 \la\dash SR_M(\qmf,p_1(x)).
\end{align*}
Thus, $\la\dash SR_M(\qmf,p(x))$ is concave in $p(x)$.
\end{proof}

\begin{lemma}
$\la \dash SR_M(\qmf)$ is related to the optimal sum rate as follows:
$$ \min_{\la \in \{0,1\}} \la \dash SR_M(\qmf) \geq SR^*(\qmf), $$
i.e. the minimum value of $\la\dash SR_M(\qmf)$ at the boundaries, i.e. $\la =0,1$, yields an upper bound on the optimal sum rate, $SR^*(\qmf)$ .
\end{lemma}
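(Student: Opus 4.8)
The plan is to show that each of $0\dash SR(\qmf)$ and $1\dash SR(\qmf)$ individually dominates $SR^*(\qmf)$; the stated inequality is then immediate. Moreover, the broadcast problem is symmetric under exchanging the two receivers: swapping $Y\leftrightarrow Z$ and $U\leftrightarrow V$ leaves $SR^*$ and $SR_{UV}$ unchanged and sends $\la\dash SR$ to $(1-\la)\dash SR$. Hence it suffices to prove $1\dash SR(\qmf)\ge SR^*(\qmf)$, and the $\la=0$ case follows by this symmetry.

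For $\la=1$ I would chain three one-line inequalities. First, the UV outer bound gives $SR^*(\qmf)\le SR_{UV}(\qmf)$. Second, retaining only the middle term inside the minimum in the definition of $SR_{UV}$ yields
\[
SR_{UV}(\qmf)\;\le\;\max_{p(u,v,x)}\big(I(U;Y)+I(X;Z|U)\big)\;=\;\max_{p(u,x)}\big(I(U;Y)+I(X;Z|U)\big),
\]
the last equality holding because $V$ does not appear. Third, I claim this quantity lower-bounds $1\dash SR(\qmf)$: in
\[
\la\dash SR(\qmf,p(u,v,w,x))=\la I(W;Y)+(1-\la)I(W;Z)+I(U;Y|W)+I(V;Z|W)-I(U;V|W),
\]
set $\la=1$ and restrict to joint distributions in which $W$ is a copy of $U$ and $V=X$; then $I(W;Y)+I(U;Y|W)=I(U;Y)$, $I(V;Z|W)=I(X;Z|U)$, and $I(U;V|W)=0$, so $1\dash SR(\qmf,p)=I(U;Y)+I(X;Z|U)$. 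Maximizing over $p(u,x)$ and recalling that $1\dash SR(\qmf)$ is the maximum over all admissible $p(u,v,w,x)$ gives $1\dash SR(\qmf)\ge\max_{p(u,x)}\big(I(U;Y)+I(X;Z|U)\big)\ge SR_{UV}(\qmf)\ge SR^*(\qmf)$. Applying the receiver-symmetry reduction — equivalently, taking $W$ a copy of $V$, $U=X$, $\la=0$, and using the term $I(V;Z)+I(X;Y|V)$ of $SR_{UV}$ — gives $0\dash SR(\qmf)\ge SR^*(\qmf)$, and therefore $\min_{\la\in\{0,1\}}\la\dash SR(\qmf)\ge SR^*(\qmf)$.

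There is essentially no obstacle here: every step is a trivial inequality, and the only idea is identifying the degenerate choice of auxiliaries ($W$ a copy of $U$, $V=X$) that collapses $1\dash SR$ onto the relevant face of the UV region. I should note that one cannot simply invoke $1\dash SR(\qmf)\ge SR_{MIB}(\qmf)$ (which holds, since $SR_{MIB}=\max_p\min_\la\la\dash SR(\qmf,p)\le\max_p\,1\dash SR(\qmf,p)$), because that compares against the inner bound, which is the wrong side. If one wished to avoid citing the UV outer bound as a black box, the same conclusion follows directly from Fano's inequality: bound $n(R_1+R_2)\le I(M_1;Y^n)+I(M_2;Z^n)+n\epsilon_n$, introduce $U_i=(M_1,Y^{i-1},Z_{i+1}^n)$, and single-letterize; but the route through $SR_{UV}$ is shorter and already available in the text.
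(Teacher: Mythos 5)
Your proof is correct and follows essentially the same route as the paper's: pick degenerate auxiliaries that collapse $\la\dash SR$ at the endpoints onto a single face of the UV outer bound (the paper uses $U=X$, $V=\emptyset$ with $W$ generic at $\la=0$; your $W$ a copy of $U$, $V=X$ at $\la=1$ is the mirror image), then invoke $SR^*\leq SR_{UV}\leq$ that face. The only difference is that the paper also proves the reverse inequality $0\dash SR(\qmf,p(x))\leq \max_{p(w|x)}\bigl(I(W;Z)+I(X;Y|W)\bigr)$, establishing an equality that is not actually needed for the stated lemma, whereas you prove only the direction required.
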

\begin{proof} We prove the statement for $\la=0$; the proof for $\la=1$ is similar. We will show that 
\begin{equation}
0\dash SR_M(\qmf, p(x))=\max_{p(w|x)}I(W;Z)+I(X;Y|W).
\label{eq:0lsr}
\end{equation}
Once \eqref{eq:0lsr} is established the proof then becomes an immediate, as $0\dash SR_M(\qmf, p(x))$ will in turn be an upper bound on the optimal sum rate by the UV outer bound (replace W by V).

  To show \eqref{eq:0lsr} first note that by setting $V = \emptyset, U = X$ in \eqref{eq:lmib1lm} we obtain
$$0\dash SR_M(\qmf, p(x)) \geq \max_{p(w|x)}I(W;Z)+I(X;Y|W).$$

To obtain the other direction, observe that
\begin{align}0\dash SR_M(\qmf, p(x))&=\max_{p(u,v,w|x)}\big\{I(W;Z)+I(U;Y|W)+I(V;Z|W)-I(U;V|W)\big\}\nonumber \\&=
\max_{p(u,v,w|x)}\big\{I(VW;Z)+I(U;Y|W)-I(U;V|W)\big\}\nonumber\\
&= \max_{p(u,v,w|x)}\big\{I(VW;Z)+I(U;Y|VW) - I(U;V|WY)\big\}\nonumber\\
& \leq \max_{p(u,v,w|x)}\big\{I(VW;Z)+I(X;Y|VW)\big\}\nonumber\\
&=\max_{p(w'|x)}I(W';Z)+I(X;Y|W').\nonumber\end{align}
Note that in the last step we replace $(V,W)$ by $W'$.

Thus we have, as desired,
$$0\dash SR_M(\qmf, p(x)) = \max_{p(w|x)}I(W;Z)+I(X;Y|W).$$
\end{proof}
\begin{corollary}
\label{co:obs}
If the minimum value of $\la \dash SR_M(\qmf)$ is attained at $\la=0$ or $\la=1$ then $SR_M(\qmf) = SR^*(\qmf)$, i.e. Marton's strategy achieves the optimal sum rate.
\end{corollary}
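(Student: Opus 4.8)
The plan is to chain together the two facts already in hand: the min-max identity of Claim~\ref{cl:ltomib}, which lets us write Marton's sum-rate as $SR(\qmf) = \min_{\la \in [0,1]} \la\dash SR(\qmf)$, and the preceding Lemma, which gives $\min_{\la \in \{0,1\}} \la\dash SR(\qmf) \geq SR^*(\qmf)$. The bridge between them is precisely the hypothesis that the minimum of $\la\dash SR(\qmf)$ over the whole interval $[0,1]$ is already attained at an endpoint.

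Concretely, I would first assume without loss of generality that the minimum of $\la\dash SR(\qmf)$ over $\la \in [0,1]$ is attained at $\la = 0$, the case $\la = 1$ being symmetric (interchange the roles of $Y$ and $Z$). Under this hypothesis $0\dash SR(\qmf) \leq 1\dash SR(\qmf)$, so $\min_{\la \in \{0,1\}} \la\dash SR(\qmf) = 0\dash SR(\qmf)$, and by Claim~\ref{cl:ltomib} we also have $SR(\qmf) = \min_{\la\in[0,1]} \la\dash SR(\qmf) = 0\dash SR(\qmf)$.

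Next I would invoke the preceding Lemma, which yields $0\dash SR(\qmf) \geq SR^*(\qmf)$; combined with the previous identity this gives $SR(\qmf) \geq SR^*(\qmf)$. Since Marton's inner bound is an achievable region, its sum-rate can never exceed the true optimal sum-rate, i.e.\ $SR(\qmf) \leq SR^*(\qmf)$. Hence $SR(\qmf) = SR^*(\qmf)$, which is exactly the assertion that Marton's strategy attains the optimal sum-rate.

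There is essentially no obstacle: the substantive work has already been carried out in the Lemma and in Claim~\ref{cl:ltomib}. The only points requiring a moment's care are the trivial reduction of $\min_{\la\in[0,1]}$ to $\min_{\la\in\{0,1\}}$ under the endpoint hypothesis, and the observation that the reverse inequality $SR(\qmf) \leq SR^*(\qmf)$ comes for free because Marton's bound is an inner bound.
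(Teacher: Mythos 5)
Your proof is correct and follows essentially the same route as the paper: both chain the identity $SR(\qmf)=\min_{\la\in[0,1]}\la\dash SR(\qmf)$ from Claim~\ref{cl:ltomib}, the achievability inequality $SR(\qmf)\leq SR^*(\qmf)$, and the preceding Lemma's bound $SR^*(\qmf)\leq\min_{\la\in\{0,1\}}\la\dash SR(\qmf)$, with the endpoint hypothesis forcing equality throughout. Your write-up merely makes the endpoint reduction and the symmetry between $\la=0$ and $\la=1$ explicit.
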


\begin{proof}
This follows from the relationships
$$ \min_{\la \in [0,1]}  \la \dash SR_M(\qmf) = SR_M(\qmf) \leq SR^*(\qmf) \leq \min_{\la \in \{0,1\}} \la \dash SR_M(\qmf). $$
\end{proof}

\begin{lemma}
\label{le:cardlsr}
To compute the maximum sum rate in \eqref{eq:lmib1l}, it suffices to consider auxiliary random variables that satisfy  $|\Uc|, |\Vc|, |\Wc| \leq |\Xc|$.
\end{lemma}
\begin{proof} This is proved in Theorem 2 of \cite{gea10}.
\end{proof}

\begin{lemma}\label{claim:derivative} Take some arbitrary $p(x)$ and real $\lambda^*$. Then for any $p(w^*,u^*,v^*|x)$ achieving $\la^*\dash SR_M(\qmf,p(x))$, the line $\lambda\mapsto (\lambda-\lambda^*)(I(W^*;Y)-I(W^*;Z))+\la^*\dash SR_M(\qmf,p(x))$ is a supporting hyperplane to the convex curve $\lambda\mapsto \la\dash SR_M(\qmf,p(x))$.
\end{lemma}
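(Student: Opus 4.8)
The plan is to exploit the fact, just established, that $\lambda \mapsto \lambda\dash SR(\qmf, p(x))$ is convex on $[0,1]$, together with the elementary observation that for a \emph{fixed} choice $p(w^*, u^*, v^*|x)$ the map $\lambda \mapsto \lambda\dash SR(\qmf, p(w^*,u^*,v^*,x))$ is \emph{affine} (indeed linear) in $\lambda$, with slope exactly $I(W^*;Y) - I(W^*;Z)$. Writing out the definition,
\begin{align*}
\lambda\dash SR(\qmf, p(w^*,u^*,v^*,x)) &= \lambda I(W^*;Y) + (1-\lambda) I(W^*;Z) \\
&\quad + I(U^*;Y|W^*) + I(V^*;Z|W^*) - I(U^*;V^*|W^*),
\end{align*}
this is of the form $\lambda\bigl(I(W^*;Y) - I(W^*;Z)\bigr) + C$, where $C$ collects the $\lambda$-independent terms, namely $C = I(W^*;Z) + I(U^*;Y|W^*) + I(V^*;Z|W^*) - I(U^*;V^*|W^*)$.

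The key steps, in order, are: (i) Observe that $\lambda\dash SR(\qmf,p(x)) = \max_{p(u,v,w|x)} \lambda\dash SR(\qmf, p(u,v,w,x))$ is a pointwise maximum of affine functions of $\lambda$, hence convex (already proved as the previous lemma). (ii) Fix $\lambda^*$ and pick $p(w^*,u^*,v^*|x)$ attaining the maximum at $\lambda^*$, so that $\lambda^*\dash SR(\qmf, p(w^*,u^*,v^*,x)) = \lambda^*\dash SR(\qmf,p(x))$, i.e. the affine function and the convex curve agree at $\lambda^*$. (iii) For \emph{every} other $\lambda$, the affine function lies weakly below the curve, since the curve is the supremum over all choices and in particular dominates this one fixed choice:
\[
\lambda\dash SR(\qmf,p(x)) \;\geq\; \lambda\dash SR(\qmf, p(w^*,u^*,v^*,x)) \;=\; (\lambda-\lambda^*)\bigl(I(W^*;Y)-I(W^*;Z)\bigr) + \lambda^*\dash SR(\qmf,p(x)),
\]
where the last equality uses that the affine line through $\bigl(\lambda^*, \lambda^*\dash SR(\qmf,p(x))\bigr)$ with the stated slope is precisely $\lambda\dash SR(\qmf, p(w^*,u^*,v^*,x))$ by step (i). This says exactly that the stated line is a supporting hyperplane to the curve at $\lambda^*$.

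There is essentially no obstacle here: the whole content is that a maximum of affine functions is supported at any point by whichever affine function achieves the max there, and the only thing to verify carefully is the bookkeeping in step (ii)—that the constant term works out so the supporting line passes through $\bigl(\lambda^*, \lambda^*\dash SR(\qmf,p(x))\bigr)$—which is immediate from $\lambda^*\dash SR(\qmf, p(w^*,u^*,v^*,x)) = \lambda^*\dash SR(\qmf,p(x))$ by the choice of the maximizer. One should note that the lemma as stated allows $\lambda^* \in \mathbb{R}$ rather than just $[0,1]$; the argument above is insensitive to this, since the affine identity in step (i) holds for all real $\lambda$ and the maximizer $p(w^*,u^*,v^*|x)$ is assumed given. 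If one wanted, one could also phrase this as: the supporting hyperplane is a subgradient of the convex function $\lambda\dash SR(\qmf,p(x))$ at $\lambda^*$, with $I(W^*;Y)-I(W^*;Z)$ an element of the subdifferential.
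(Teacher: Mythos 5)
Your argument is correct and is essentially identical to the paper's own proof: both verify that the line touches the curve at $\lambda^*$ (since $p(w^*,u^*,v^*|x)$ is a maximizer there) and then observe that for every other $\lambda$ the claimed inequality is exactly the statement that the pointwise maximum over $p(u,v,w|x)$ dominates the affine function given by the fixed choice $p(w^*,u^*,v^*|x)$. Your additional remarks about affineness in $\lambda$ and the subgradient interpretation are accurate but do not change the substance.
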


\begin{proof} At $\lambda=\lambda^*$, the expression $(\lambda-\lambda^*)(I(W^*;Y)-I(W^*;Z))+\la^*\dash SR_M(\qmf,p(x))$ is equal to $\la^*\dash SR_M(\qmf,p(x))$ which is a point on the curve $\lambda\mapsto \la\dash SR_M(\qmf,p(x))$. We need to show that for any arbitrary $\lambda$,
\begin{align*}&\la\dash SR_M(\qmf,p(x))\geq (\lambda-\lambda^*)(I(W^*;Y)-I(W^*;Z))+\la^*\dash SR_M(\qmf,p(x)).\end{align*}
The above inequality holds because it is equivalent to
\begin{align*}&\la\dash SR_M(\qmf,p(x))\geq \lambda I(W^*;Y)+(1-\lambda)I(W^*;Z)+I(U^*;Y|W^*)+I(V^*;Z|W^*)-I(U^*;V^*|W^*).\end{align*}
\end{proof}

\begin{lemma} $\la \dash SR_M(\qmf,p(x))$ is linear in $\lambda$ for more capable\footnote{For the definitions of less noisy broadcast channel or more capable broadcast channel, please refer to \cite{elk11}.} channels and constant in $\lambda$ for less noisy channels and deterministic channels.
\end{lemma}

\begin{proof} 
More capable: Assume that $Y$ is more capable than $Z$.
\begin{align*}
\la \dash SR_M(\qmf,p(x)) &= \max_{p(u,v,w|x)}\big[\lambda I(W;Y) + (1-\lambda) I(W;Z) + I(U;Y|W) + I(V;Z|W) - I(U;V|W)\big] \\
& \leq \max_{p(u,v,w|x)}\big[\lambda I(W;Y) + (1-\lambda) I(W;Z) +I(U;Y|W) +  I(V;Z|U,W)\big] \\
& \leq \max_{p(u,w|x)}\big[\lambda I(W;Y) + (1-\lambda) I(W;Z) +I(U;Y|W) +  I(X;Z|U,W)\big] \\
& \stackrel{(a)}{\leq} \max_{p(u,w|x)}\big[\lambda I(W;Y) + (1-\lambda) I(W;Z) +I(U;Y|W) +  I(X;Y|U,W)\big] \\
&  \leq \max_{p(w|x)}\big[\lambda I(W;Y) + (1-\lambda) I(W;Z) + I(X;Y|W)\big]\\
& = I(X;Y) + (1-\lambda) \max_{p(w|x)}(I(W;Z) - I(W;Y)).
\end{align*}
The inequality marked $(a)$ is justified by the more-capable assumption.
On the other hand setting $U=X$, $V=\emptyset$ shows that $\la \dash SR_M(\qmf,p(x))\geq I(X;Y) + (1-\lambda) \max_{p(w|x)}(I(W;Z) - I(W;Y))$.
Thus, when  $Y$ is more capable than $Z$,
\begin{equation}
\la \dash SR_M(\qmf,p(x)) =  I(X;Y) + (1-\lambda) \max_{p(w|x)}(I(W;Z) - I(W;Y))\label{eq:lmc}
\end{equation}
and  is linear in $\lambda$.

Less Noisy: Assume that $Y$ is less noisy than $Z$; and hence   $Y$ is also more capable than $Z$. From \eqref{eq:lmc}
\begin{align*}
\la \dash SR_M(\qmf,p(x)) &=  I(X;Y) + (1-\lambda) \max_{p(w|x)}(I(W;Z) - I(W;Y))\\
& = I(X;Y).
\end{align*}
The second equality follows since $I(W;Z) \leq I(W;Y),~ \forall~ W \to X \to (Y,Z)$ (definition of less noisy) and equality can be achieved by setting $W=\emptyset$.

Deterministic:
\begin{align*}
\la \dash SR_M(\qmf,p(x)) &= \max_{p(u,v,w|x)}\big[\lambda I(W;Y) + (1-\lambda) I(W;Z) + I(U;Y|W) + I(V;Z|W) - I(U;V|W)\big] \\
& \leq \max_{p(u,v,w|x)}\big[\lambda I(W;Y) + (1-\lambda) I(W;Z) +I(U;Y|W) +  I(V;Z|U,W)\big] \\
& \leq \max_{p(u,v,w|x)}\big[\lambda I(W;Y,Z) + (1-\lambda) I(W;Y,Z) +I(U;Y,Z|W) +  I(V;Y,Z|U,W)\big] \\
&  \leq I(X;Y,Z) = H(Y,Z).
\end{align*}
One the other hand setting $W=\emptyset$, $U=Y$, $V=Z$ shows that $\la \dash SR_M(\qmf,p(x))\geq H(Y,Z)$. Hence $\la \dash SR_M(\qmf,p(x))$ is a constant. (Note that these choices of auxiliaries, i.e. $U=Y$, $V=Z$, are permissible for deterministic channels since $(U,V) \to X \to (Y,Z)$ is a  Markov chain.)

\end{proof}

\begin{remark}
In each of the cases above, it is clear that the minimizing $\lambda$ for the $\la\dash SR_M(\qmf)$ lies  on $\la \in \{0,1\}$. Thus the optimality of $SR_M$ could be deduced alternately using Corollary \ref{co:obs}.
\end{remark}

\subsection{On $SR_M$ for product channels}
\label{subsec:lambdasumrateproduct}
In this section we consider the behavior of $SR_M(\qmf)$ for the product of two non-identical channels. 
An interested reader may wonder why we considered {\em factorization of $\la\dash SR_M(\qmf)$} as opposed to {\em factorization of $SR_M(\qmf)$}. Indeed we will show that there are channels $\qmf_1, \qmf_2$ such that
$$  SR_M(\qmf_1 \times \qmf_2) > SR_M(\qmf_1) + SR_M(\qmf_2). $$

\begin{lemma}
\label{le:mnf}
Let $p = 0.1, e = H(0.1) = \log_2 10 - 0.9 \log_2 9$. Consider a product channel formed by the following components: Let the channels $X_1 \to Y_1$ and $X_2 \to Z_2$ be $BEC(e)$ and the channels $X_1 \to Z_1$ and $X_2 \to Y_2$ be $BSC(p)$. For this product channel
$$ SR_M  (\qmf_1 \times \qmf_2) > SR_M(\qmf_1) + SR_M(\qmf_2). $$
\label{le:nofmib}
\end{lemma}
\begin{proof}
From \cite{nai10}, since $1-e = 1 - H(p)$ we know that $Y_1$ is more capable than $Z_1$ and $Z_2$ is more capable than $Y_2$. Thus, from Theorem \ref{th:mcp}, we know that Marton's inner bound is optimal for this channel. Hence from Lemma \ref{le:mm} and Theorem \ref{th:fact}, we have that
$$ SR_M  (\qmf_1 \times \qmf_2) = \min_{\la \in [0,1]} \la \dash SR_M(\qmf_1) + \la \dash SR_M(\qmf_2). $$
By the skew-symmetry we know that $\la \dash SR_M(\qmf_2) = (1-\la) \dash SR_M(\qmf_1)$. Further, from the symmetry, it is easy to show that it suffices to consider $\P(X=0) = \frac 12$ to compute $\la \dash SR_M(\qmf_1)$. In particular one can show that
$$ \la \dash SR_M(\qmf_1) = C + (1-\la) d^*, $$
where $C$ is the common capacity of the $BSC(p)$ and $BEC(e)$, and $d^* = \max_{p(x)} I(X;Y) - I(X;Z)$. For the chosen parameters $d^* \approx 0.03877$.
The maximum sum rate of the channel $\qmf_1(y_1, z_1|x_1)$, since $Y_1$ is more capable than $Z_1$, is given by the capacity to receiver $Y_1$;  hence $SR_M(\qmf_1) = C$, the common capacity.

Thus $SR_M  (\qmf_1 \times \qmf_2) - SR_M(\qmf_1) - SR_M(\qmf_2)$ is given by
\begin{align*} &\min_{\la \in [0,1]} \big(C + (1-\la) d^* + C + \la d^* \big) - C - C = d^* > 0.  \qedhere
\end{align*}
\end{proof}

\section{Randomized time-division strategy}
\label{sec:randomziedtimedivision}
Randomized time-division refers to a strategy that generalizes the simple time-division strategy. In time-division, the sender $X$ transmits exclusively to receiver $Y$ for a predetermined $\a$ fraction of the time, and  transmits exclusively to receiver $Z$ for the remaining $(1-\a)$ fraction of the time. In randomized time-division, the sender chooses the $\a$ fraction of the time that it wants to transmit to $Y$ using a codebook, thus conveying some commonly decodable information to the receivers when they decode the proper $(\a, 1-\a)$ division of slots. This strategy can be shown to improve on naive time division for some broadcast channels. For more details, an interested reader reader can refer to  \cite[pg. 216]{elk12}.

This is indeed a special (and much simpler) instance of Marton's coding strategy that sets   $U=X, V=\emptyset$ when $W\in \Ac$ and $V=X, U=\emptyset$ when  $W\in\Ac^c$. This strategy yields a $\la$-sum-rate given by
\begin{align*} \la\dash SR_{RTD}(\qmf) & = \max_{p(w,x)} \la I(W;Y) + (1-\la) I(W;Z) + \sum_{w \in \Ac} \P(W=w) I(X;Y|W=w) \\
& \quad \quad  \qquad + \sum_{w \in \Ac^c}\P(W=w) I(X;Z|W=w).\end{align*}
Using standard arguments it follows that it suffices to consider $|\Wc| \leq |\Xc|$ to compute the $\la$-sum-rate.

It was shown \cite{nwg10} that for all binary input broadcast channels the sum rate obtained using the simple randomized time division strategy matches the sum rate obtained using Marton's coding strategy, i.e. $SR_M(\qmf) = SR_{RTD}(\qmf)$ when $|\Xc|=2$. This result is based on the inequality that whenever $|\Xc|=2$ and $(U,V) \to X \to (Y,Z)$ is Markov we have
$$ I(U;Y) + I(V;Z) - I(U;V) \leq \max\{I(X;Y), I(X;Z)\}. $$

Using this inequality it also immediately follows that $\la \dash SR_{RTD}(\qmf) = \la \dash SR_M(\qmf)$.

For  the product of two channels $\qmf_1 \times \qmf_2$ one can define a slight generalization of the RTD strategy (equivalently this is a natural generalization of RTD for the 2-letter channel $\qmf \times \qmf$). This is again a special instance of Marton's coding strategy that sets
$$ (U,V) := \begin{cases} \begin{array}{lc} U=(X_1 X_2), V=\emptyset & w \in \Ac_1 \\ U=X_1 , V=X_2  & w \in \Ac_2 \\ U=X_2 , V=X_1  & w \in\Ac_3\\  U=\emptyset , V=(X_1, X_2)  & w \in \Ac_4 \end{array} \end{cases}, $$
where $\Ac_1, \Ac_2, \Ac_3, \Ac_4$ denotes a partition of $\Wc$.
Let this scheme be called $2\dash RTD$. We define
\begin{align*}
& \la \dash SR_{2\dash RTD}(\qmf_1 \times \qmf_2) \\
&\quad = \max_{p(w,x_1,x_2)} \la I(W;Y_1, Y_2) + (1-\la) I(W;Z_1, Z_2) + \sum_{w \in \Ac_1} \P(W=w) I(X_1, X_2; Y_1, Y_2|W=w) \\
& \qquad + \sum_{w \in \Ac_2} \P(W=w) \big( I(X_1; Y_1, Y_2|W=w) + I(X_2; Z_1, Z_2|W=w) - I(X_1; X_2|W=w) \big) \\
&\qquad + \sum_{w \in \Ac_3} \P(W=w) \big( I(X_1; Y_1, Y_2|W=w) + I(X_2; Z_1, Z_2|W=w) - I(X_1; X_2|W=w) \big) \\
& \qquad + \sum_{w \in \Ac_4} \P(W=w) I(X_1, X_2; Z_1, Z_2|W=w).
\end{align*}

Similarly define
\begin{align*}
& SR_{2\dash RTD}(\qmf_1 \times \qmf_2) \\
&\quad = \max_{p(w,x_1,x_2)} \min\{ I(W;Y_1, Y_2), I(W;Z_1, Z_2)\} +\sum_{w \in \Ac_1} \P(W=w) I(X_1, X_2; Y_1, Y_2|W=w) \\
& \qquad + \sum_{w \in \Ac_2} \P(W=w) \big( I(X_1; Y_1, Y_2|W=w) + I(X_2; Z_1, Z_2|W=w) - I(X_1; X_2|W=w) \big) \\
&\qquad + \sum_{w \in \Ac_3} \P(W=w) \big( I(X_1; Y_1, Y_2|W=w) + I(X_2; Z_1, Z_2|W=w) - I(X_1; X_2|W=w) \big) \\
& \qquad + \sum_{w \in \Ac_4} \P(W=w) I(X_1, X_2; Z_1, Z_2|W=w).
\end{align*}

In a similar fashion to the proof of Lemma \ref{le:mm} one can show the following lemma.
\begin{lemma}
\label{le:mm1} The following holds:
\begin{equation*} \min_{\la \in [0,1]}  \la \dash SR_{2\dash RTD}(\qmf \times \qmf)  = SR_{2\dash RTD}(\qmf).
\end{equation*}
\end{lemma}
The proof is given in Appendix \ref{sec:Appendix:2}.

\begin{remark}
Suppose there is a binary input channel $\qmf(y,z|x)$ such that $SR_{2\dash RTD}(\qmf \times \qmf) > 2 SR_{RTD}(\qmf)$ then it would immediately imply that
$$ SR_{2M}(\qmf) \geq \frac 12 SR_{2\dash RTD}(\qmf \times \qmf) > SR_{RTD}(\qmf) = SR_M(\qmf), $$
where the last equality follows from the result about binary input broadcast channels. This would have been an easy technique to establish the strict sub-optimality of Marton's coding scheme if it had worked. However the next lemma shows that this cannot happen. Indeed we show that $\la \dash SR_{2\dash RTD}(\qmf_1 \times \qmf_2) = \la \dash SR_{RTD}(\qmf_1) + \la \dash SR_{RTD}(\qmf_2)$ for channels with arbitrary input cardinality. Hence from Lemma \ref{le:mm1} it will immediately follow that $SR_{2\dash RTD}(\qmf \times \qmf) = 2 SR_{RTD}(\qmf).$
\end{remark}

\begin{theorem}
\label{th:rtdf}
The following holds:
$$\la \dash SR_{2\dash RTD}(\qmf_1 \times \qmf_2) = \la \dash SR_{RTD}(\qmf_1) + \la \dash SR_{RTD}(\qmf_2).$$
\end{theorem}
\begin{proof}
By taking the product of the optimizing distributions for $\la \dash SR_{RTD}(\qmf_1), \la \dash SR_{RTD}(\qmf_2)$ one can immediately see that
$$\la \dash SR_{2\dash RTD}(\qmf_1 \times \qmf_2) \geq \la \dash SR_{RTD}(\qmf_1) + \la \dash SR_{RTD}(\qmf_2).$$

Hence it suffices to show that
$$\la \dash SR_{2\dash RTD}(\qmf_1 \times \qmf_2) \leq \la \dash SR_{RTD}(\qmf_1) + \la \dash SR_{RTD}(\qmf_2).$$

Observe that
{\small\begin{align}
&\la I(W;Y_1, Y_2) + (1-\la) I(W;Z_1, Z_2) + \sum_{w \in \Ac_1} \P(W=w) I(X_1, X_2; Y_1, Y_2|W=w)  \nonumber\\
& \qquad + \sum_{w \in \Ac_2} \P(W=w) \big( I(X_1; Y_1, Y_2|W=w) + I(X_2; Z_1, Z_2|W=w) - I(X_1; X_2|W=w) \big)  \nonumber\\
&\qquad + \sum_{w \in \Ac_3} \P(W=w) \big( I(X_1; Y_1, Y_2|W=w) + I(X_2; Z_1, Z_2|W=w) - I(X_1; X_2|W=w) \big) \nonumber\\
& \qquad + \sum_{w \in \Ac_4} \P(W=w) I(X_1, X_2; Z_1, Z_2|W=w) \nonumber \\
& = \la H(Y_1, Y_2) + (1-\la) H(Z_1, Z_2) \nonumber\\
&\qquad + \sum_{w \in \Ac_1} \P(W=w) \big(  I(X_1, X_2; Y_1, Y_2|W=w) -\la H(Y_1,Y_2|W=w) - (1-\la) H(Z_1, Z_2|W=w) \big)  \nonumber\\
& \qquad + \sum_{w \in \Ac_2} \P(W=w) \big( I(X_1; Y_1, Y_2|W=w) + I(X_2; Z_1, Z_2|W=w) - I(X_1; X_2|W=w)  \nonumber\\
& \qquad \qquad \qquad -\la H(Y_1,Y_2|W=w) - (1-\la) H(Z_1, Z_2|W=w) \big) \label{eq:2e1}\\
&\qquad + \sum_{w \in \Ac_3} \P(W=w) \big( I(X_1; Y_1, Y_2|W=w) + I(X_2; Z_1, Z_2|W=w) - I(X_1; X_2|W=w)  \nonumber\\
& \qquad \qquad \qquad -\la H(Y_1,Y_2|W=w) - (1-\la) H(Z_1, Z_2|W=w) \big) \nonumber\\
& \qquad + \sum_{w \in \Ac_4} \P(W=w) \big(  I(X_1, X_2; Z_1, Z_2|W=w)  -\la H(Y_1,Y_2|W=w) - (1-\la) H(Z_1, Z_2|W=w) \big). \nonumber
\end{align}}
The idea of the proof is to factorize each of the four summation terms in \eqref{eq:2e1} separately.

Consider the following manipulations of the terms.
\begin{align}
& I(X_1, X_2; Y_1, Y_2|W=w) -\la H(Y_1,Y_2|W=w) - (1-\la) H(Z_1, Z_2|W=w) \nonumber \\
& \quad = I(X_1; Y_1|W=w, Y_2) + I(X_2; Y_2|W=w, Z_1) -\la H(Y_1|W=w, Y_2) \label{eq:2e1s1} \\
& \quad \quad -\la H(Y_2|W=w, Z_1) - (1-\la) H(Z_1|W=w, Y_2) - (1-\la) H(Z_2|W=w, Z_1), \nonumber
\end{align}

\begin{align}
& I(X_1; Y_1, Y_2|W=w) + I(X_2; Z_1, Z_2|W=w) - I(X_1; X_2|W=w)  \nonumber\\
&  \qquad -\la H(Y_1,Y_2|W=w) - (1-\la) H(Z_1, Z_2|W=w)  \nonumber\\
& \quad = I(X_1; Y_1|W=w, Y_2) + I(X_2; Z_2|W=w, Z_1) -\la H(Y_1|W=w, Y_2) \nonumber \\
& \quad \quad -\la H(Y_2|W=w, Z_1) - (1-\la) H(Z_1|W=w, Y_2) - (1-\la) H(Z_2|W=w, Z_1) \nonumber \\
& \quad \quad + I(X_2; Z_1|W=w) + I(X_1; Y_2|W=w, Z_1)- I(X_1; X_2|W=w)  \nonumber\\
& \quad \leq I(X_1; Y_1|W=w, Y_2) + I(X_2; Z_2|W=w, Z_1) -\la H(Y_1|W=w, Y_2) \label{eq:2e1s2}  \\
& \quad \quad -\la H(Y_2|W=w, Z_1) - (1-\la) H(Z_1|W=w, Y_2) - (1-\la) H(Z_2|W=w, Z_1), \nonumber
\end{align}
where the last inequality follows since
$ I(X_1; X_2|W=w)  = I(Z_1, X_1; X_2|W=w) = I(Z_1; X_2|W=w) + I(X_1; X_2|W=w, Z_1) $ $ \geq I(Z_1; X_2|W=w) + I(X_1; Y_2|W=w, Z_1).$
Here we use the fact that $(W,X_2) \to X_1 \to Z_1$ is Markov and $(X_1, Z_1, W) \to X_2 \to Y_2$ is Markov.

In a similar fashion we have
\begin{align}
& I(X_2; Y_1, Y_2|W=w) + I(X_1; Z_1, Z_2|W=w) - I(X_1; X_2|W=w)  \nonumber\\
& \quad \leq I(X_1; Z_1|W=w, Z_2) + I(X_2; Y_2|W=w, Y_1) -\la H(Y_1|W=w, Z_2) \label{eq:2e1s3}  \\
& \quad \quad -\la H(Y_2|W=w, Y_1) - (1-\la) H(Z_1|W=w, Z_2) - (1-\la) H(Z_2|W=w, Y_1) \nonumber
\end{align}

Finally
\begin{align}
& I(X_1, X_2; Z_1, Z_2|W=w) -\la H(Y_1,Y_2|W=w) - (1-\la) H(Z_1, Z_2|W=w) \nonumber \\
& \quad = I(X_1; Z_1|W=w, Z_2) + I(X_2; Z_2|W=w, Y_1) -\la H(Y_1|W=w, Z_2) \label{eq:2e1s4} \\
& \quad \quad -\la H(Y_2|W=w, Y_1) - (1-\la) H(Z_1|W=w, Z_2) - (1-\la) H(Z_2|W=w, Y_1) \nonumber
\end{align}

Define new random variables $W_1, W_2$ having alphabets given by
$$ \Wc_1 = \begin{cases} \begin{array}{lc} (w,z_2) & w \in \Ac_1 \cup \Ac_2, z_2 \in \Zc \\
(w,y_2) & w \in \Ac_3 \cup \Ac_4, y_2 \in \Yc \end{array} \end{cases} \mbox{and} \qquad  \Wc_2 = \begin{cases} \begin{array}{lc} (w,y_1) & w \in \Ac_1 \cup \Ac_2, y_1 \in \Yc \\
(w,z_1) & w \in \Ac_3 \cup \Ac_4, z_1 \in \Zc \end{array} \end{cases}.$$

Further partition $\Wc_1$ into two sets $\Bc$ and $\Bc^c$ according to
$\Bc = \{(w,z_2):  w \in \Ac_1 \cup \Ac_2, z_2 \in \Zc\}$, and partition $\Wc_2$ into two sets $\Cc$ and $\Cc^c$ according to $\Cc = \{(w,y_1):  w \in \Ac_1 , y_1 \in \Yc\} \cup \{(w,z_1):  w \in \Ac_3 , z_1 \in \Zc\}.$

Using \eqref{eq:2e1s1}, \eqref{eq:2e1s2}, \eqref{eq:2e1s3}, \eqref{eq:2e1s4}, and the definitions of $W_1, W_2, \Bc, \Cc$ we can bound the expression in \eqref{eq:2e1} by
\begin{align*}
& \la I(W_1; Y_1) + (1-\la) I(W_1; Z_1)  + \sum_{w_1 \in \Bc} \P(W_1=w_1) I(X_1;Y_1|W_1=w_1) \\
&   \qquad + \sum_{w_1 \in \Bc^c}\P(W_1=w_1) I(X_1;Z_1|W_1=w_1) + \la I(W_2; Y_2) + (1-\la) I(W_2; Z_2) \\
&  \qquad  + \sum_{w_2 \in \Cc} \P(W_2=w_2) I(X_2;Y_2|W_2=w_2)  + \sum_{w_2 \in \Cc^c}\P(W_2=w_2) I(X_2;Z_2|W_2=w_2) \\
& \quad \leq \la \dash SR_{RTD}(\qmf_1) + \la \dash SR_{RTD}(\qmf_2).
\end{align*}

This implies that $$\la \dash SR_{2\dash RTD}(\qmf_1 \times \qmf_2) \leq \la \dash SR_{RTD}(\qmf_1) + \la \dash SR_{RTD}(\qmf_2),$$
and completes the proof of the Lemma.
\end{proof}

\begin{remark}
We wish to bring following unique feature to this proof to the attention of the readers: in identifying the auxiliaries $W_1, W_2$ in terms of $W$, past  or future of $Z$, past or future of $Y$, we actually chose different terms depending on  $w \in \Wc$. This is a freedom that has never been  exploited before (to the best of the knowledge of the authors). A consistent choice does not seem to work here.
\end{remark}

\section{Conclusion}
In this paper we show a variety of results related to Marton\rq{}s inner bound and its optimality. We also show that the tightest known outer bound is strictly sub-optimal. An outer bound is presented for product broadcast channels which is then shown to coincide with Marton\rq{}s inner bound for classes of channels whose capacity regions were previously unknown. This outer bound turns out to be a strict improvement over the previously known tightest outer bound for product broadcast channels. It would be very interesting to extend this outer bound to non-product channels in a natural way. Further a variety of other interesting results are also established which aid in the computation of Marton\rq{}s inner bound.

\bibliographystyle{hieeetr}
\bibliography{mybiblio}

\begin{IEEEbiographynophoto}{Yanlin Geng (M'12)}
Yanlin Geng received his B.Sc. (mathematics) and M.Eng. (signal and information processing) from Peking University, and Ph.D. (information engineering) from The Chinese University of Hong Kong in 2006, 2009, and 2012, respectively. He is currently a postdoctoral researcher in the Information Engineering department at The Chinese University of Hong Kong.
\end{IEEEbiographynophoto}

\begin{IEEEbiographynophoto}{Amin Aminzadeh Gohari (S'10,M'11)}
Amin Aminzadeh Gohari is an Assistant Professor at Sharif University of Technology, Tehran, Iran. Dr. Gohari received his M.Sc. and Ph.D. degree in electrical engineering in 2010 from the University of California, Berkeley, and his B.Sc. degree in 2004 from Sharif University of Technology, Iran. He received the 2010 Eli Jury Award from UC Berkeley, Department of Electrical Engineering, for “outstanding achievement in the area of communication networks,” and the 2009–2010 Bernard Friedman Memorial Prize in Applied Mathematics from UC Berkeley, Department of Mathematics, for “demonstrated ability to do research in applied mathematics.” He also received the Gold Medal from the 41st International Mathematical Olympiad (IMO 2000) and the First Prize from the 9th International Mathematical Competition for University Students (IMC 2002).
\end{IEEEbiographynophoto}

\begin{IEEEbiographynophoto}{Chandra Nair (M'02)}
Chandra Nair is an Associate Professor in the Information Engineering department of the Chinese University of Hong Kong.
Dr. Nair received his Bachelor of Technology (B.Tech) degree in Electrical Engineering from the Indian Institute of Technology (IIT), Madras in 1999. Concurrently, he also completed a four year nurture program in Mathematics at the Institute of Mathematical Sciences (IMSc) under the auspices of the National Board of Higher Mathematics (NBHM).
He received a Masters (2002) and PhD (2005) in electrical engineering from Stanford University.  Subsequently he was a postdoctoral fellow at the theory group in Microsoft Research (Redmond) for two years. Following this he joined the IE department, CUHK, as an assistant professor in Fall 2007.
His research interests are on fundamental problems in various interdisciplinary pursuits involving information theory, combinatorial optimization, statistical physics, and algorithms. 
\end{IEEEbiographynophoto}

\begin{IEEEbiographynophoto}{Yuanming Yu}
Yuanming Yu received his B.Sc. (computer science and technology) from Tsinghua University, and is currently a Ph.D. candidate (computer science and engineering) at The Chinese University of Hong Kong since 2010. 
\end{IEEEbiographynophoto}

\appendix
\section{A Min-Max Theorem}
\label{Apendix0}
\begin{theorem} [Theorem 3 of \cite{ter72}] Let $X$ be a compact connected space, let $Y$ be a set, and let $f: X \times Y \mapsto \mathbb{R}$ be a function satisfying:
\begin{itemize}
\item[$(i)$] For any $y_1, y_2 \in Y$ there exists $y_0 \in Y$ such that
$$ f(x,y_0) \geq \frac 12 \left( f(x,y_1) + f(x,y_2) \right), \forall x \in X. $$
\item[$(ii)$] Every finite intersection of sets of the form $\{x \in X: f(x,y) \leq \alpha)\}$ with $(y,\alpha) \in Y \times R$ is closed and connected.

Then
$$\sup_{y \in Y} \min_{x \in X} f(x,y) = \min_{x \in X} \sup_{y \in Y} f(x,y). $$
\end{itemize}
\label{th:frodo}
\end{theorem}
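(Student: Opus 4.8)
The plan is to split off the easy ``weak duality'' inequality and then attack the reverse inequality by a compactness-plus-connectedness argument. First note that, applying $(ii)$ to a single set, each sublevel set $\{x: f(x,y)\le\alpha\}$ is closed, so $f(\cdot,y)$ is lower semicontinuous on the compact space $X$ and attains its minimum; likewise $x\mapsto\sup_y f(x,y)$ is lower semicontinuous (a supremum of l.s.c.\ functions) and attains its minimum on $X$, so both sides of the claimed identity are well defined. The inequality $\sup_y\min_x f(x,y)\le\min_x\sup_y f(x,y)$ is then immediate: for all $x,y$ one has $\min_{x'}f(x',y)\le f(x,y)\le\sup_{y'}f(x,y')$, and one takes $\min_x$ and then $\sup_y$.

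For the reverse inequality I would argue by contradiction. Suppose $\alpha:=\sup_y\min_x f(x,y)<\beta:=\min_x\sup_y f(x,y)$ and fix $c$ with $\alpha<c<\beta$. Since $\sup_y f(x,y)\ge\beta>c$ for every $x$, the sets $\{x': f(x',y)>c\}$ (open, being complements of closed sublevel sets) cover $X$, so by compactness there are $y_1,\dots,y_n$ with $\bigcap_{i=1}^n\{f(\cdot,y_i)\le c\}=\emptyset$; choose such a family with $n$ minimal. If $n=1$ then $f(\cdot,y_1)>c$ everywhere, whence $\min_x f(x,y_1)\ge c>\alpha$, contradicting the definition of $\alpha$. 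So $n\ge 2$, and the goal is to exhibit a covering family of size $n-1$.

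To merge $y_1$ and $y_2$ I would iterate $(i)$: for every dyadic $\lambda\in[0,1]$ this produces $y^\lambda\in Y$ with $f(x,y^\lambda)\ge\lambda f(x,y_1)+(1-\lambda)f(x,y_2)\ge\min\{f(x,y_1),f(x,y_2)\}$ for all $x$, with $y^1=y_1$ and $y^0=y_2$. Set $K=\bigcap_{i=3}^n\{f(\cdot,y_i)\le c\}$ (with $K=X$ if $n=2$), closed and connected by $(ii)$, and $H_\lambda:=K\cap\{f(\cdot,y^\lambda)\le c\}$, again closed and connected by $(ii)$. Were some $H_\lambda$ empty, $\{y^\lambda,y_3,\dots,y_n\}$ would be a covering family of size $n-1$ (when $n=2$ an empty $H_\lambda$ already contradicts $\min_x f(x,y^\lambda)\le\alpha<c$), so every $H_\lambda$ is nonempty. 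The bound $f(x,y^\lambda)\ge\min\{f(x,y_1),f(x,y_2)\}$ forces $H_\lambda\subseteq G_1\cup G_2$ with $G_1:=K\cap\{f(\cdot,y_2)\le c\}$ and $G_2:=K\cap\{f(\cdot,y_1)\le c\}$; these are closed, disjoint (their common part is $\bigcap_{i=1}^n\{f(\cdot,y_i)\le c\}=\emptyset$), connected (by $(ii)$), and nonempty (by minimality of $n$, removing $y_1$ resp.\ $y_2$). Since $H_\lambda$ is connected it lies entirely in one of $G_1,G_2$; moreover $H_1=G_2$ and $H_0=G_1$.

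The crux — and the step I expect to be the real obstacle — is to turn this dichotomy into a contradiction: one must show that the assignment $\lambda\mapsto(\,H_\lambda\subseteq G_1 \text{ or } H_\lambda\subseteq G_2\,)$ cannot be globally consistent, forcing some $H_\lambda$ to be empty (or to meet both $G_1$ and $G_2$), which contradicts the minimality of $n$. This is exactly where the connectedness half of $(ii)$, rather than mere closedness, becomes indispensable, and where the plain linear-combination estimates from $(i)$ (which only bound $f(x,y^\lambda)$ from below) do not suffice; to close the induction I would follow Terkelsen's transition argument in \cite{ter72}. No classical convexity-based minimax theorem applies here, since $Y$ carries no structure and $X$ is merely a compact connected space, so the connectedness route is essentially forced.
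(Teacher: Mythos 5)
The statement you are proving is not actually proved in the paper: it is quoted verbatim as Theorem 3 of \cite{ter72}, so the paper's ``proof'' consists of the citation, and there is no internal argument to compare yours against.

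Taken on its own terms, your write-up has a genuine gap, and you have correctly located it yourself. Everything up to the dichotomy $H_\lambda\subseteq G_1$ or $H_\lambda\subseteq G_2$ is sound: lower semicontinuity and attainment of the two minima, weak duality, the passage by compactness to a minimal finite family $y_1,\dots,y_n$ with empty common sublevel set, the dyadic iteration of $(i)$, and the use of $(ii)$ to force each nonempty connected $H_\lambda$ into exactly one of the disjoint closed sets $G_1,G_2$. But the theorem is not yet proved at that point. A partition of the dyadic rationals into the two fibres $\{\lambda: H_\lambda\subseteq G_1\}$ and $\{\lambda: H_\lambda\subseteq G_2\}$, with $0$ in one and $1$ in the other, is not by itself contradictory: the dyadics are totally disconnected, and the estimate $f(x,y^\lambda)\ge\min\{f(x,y_1),f(x,y_2)\}$ gives no control on how the fibre can switch between adjacent levels. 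Even the natural refinement --- taking $y^{(\lambda+\mu)/2}$ as the $(i)$-midpoint of $y^\lambda$ and $y^\mu$, so that $H_{(\lambda+\mu)/2}\subseteq H_\lambda\cup H_\mu$ and connectedness forces it into one of them --- only yields nested nonempty compacta on each side of a bisection sequence, still with no contradiction in sight. Deferring exactly this step to ``Terkelsen's transition argument'' means the single genuinely non-trivial ingredient of the theorem is imported rather than proved. The honest options are either to cite the result outright, as the paper does, or to carry out the transition argument in full; the half-way reconstruction, however well organized, does not constitute a proof.
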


We now present a Corollary of the above theorem that can be potentially used in many information theory scenarios.

\begin{corollary}
\label{coro:mm} Let $\Lambda_d$ be the $d$-dimensional simplex, i.e. $\lambda_i\geq0$ and $\sum_{i=1}^d\lambda_i=1$. Let $\Pc$ be a set of probability distributions $p(u)$. Let $T_i(p(u)), i=1,..,d$ be a set of functions such that the set $\Ac$, defined by
\begin{align*}\Ac &=\{(a_1,a_2,...,a_d)\in \mathbb{R}^d: a_i\leq T_i(p(u))\mbox{ for some }~p(u) \in \Pc\},
\end{align*}
is a convex set.

Then
$$\sup_{p(u) \in \Pc} \min_{\la \in \Lambda_d} \sum_{i=1}^d \la_i T_i(p(u))  = \min_{\la \in \Lambda_d}\sup_{p(u)\in \Pc}  \sum_{i=1}^d \la_i T_i(p(u)). $$
\end{corollary}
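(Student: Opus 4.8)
The plan is to deduce Corollary \ref{coro:mm} directly from Terkelsen's min-max theorem (Theorem \ref{th:frodo}) by choosing $X = \Lambda_d$ (with the subspace topology from $\mathbb{R}^d$), $Y = \Pc$, and the payoff function $f(\lambda, p(u)) = \sum_{i=1}^d \lambda_i T_i(p(u))$. The simplex $\Lambda_d$ is a compact, convex, hence connected subset of $\mathbb{R}^d$, so the hypothesis on $X$ is met, and $\Pc$ needs no structure beyond being a set. It then remains to verify the two conditions $(i)$ and $(ii)$ of Theorem \ref{th:frodo}; its conclusion $\sup_{p} \min_{\lambda} f = \min_{\lambda} \sup_{p} f$ is exactly the asserted equality.

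For condition $(i)$, fix $p_1(u), p_2(u) \in \Pc$. The vectors $(T_1(p_1), \dots, T_d(p_1))$ and $(T_1(p_2), \dots, T_d(p_2))$ both belong to $\Ac$, so by convexity of $\Ac$ their midpoint $\big(\tfrac12(T_i(p_1)+T_i(p_2))\big)_{i=1}^d$ also lies in $\Ac$. By the definition of $\Ac$ there is therefore some $p_0(u) \in \Pc$ with $T_i(p_0) \geq \tfrac12(T_i(p_1)+T_i(p_2))$ for every $i$. Multiplying by $\lambda_i \geq 0$ and summing over $i$ gives $f(\lambda, p_0) \geq \tfrac12\big(f(\lambda,p_1)+f(\lambda,p_2)\big)$ for all $\lambda \in \Lambda_d$, which is condition $(i)$.

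For condition $(ii)$, observe that for each fixed $p(u) \in \Pc$ and $\alpha \in \mathbb{R}$ the map $\lambda \mapsto f(\lambda,p(u)) = \sum_i \lambda_i T_i(p(u))$ is affine in $\lambda$, so the set $\{\lambda \in \Lambda_d : f(\lambda,p(u)) \leq \alpha\}$ is the intersection of the simplex with a closed half-space, hence a closed convex subset of $\mathbb{R}^d$. Any finite intersection of such sets is again closed and convex, and a convex set is connected; thus every finite intersection of the required form is closed and connected. With both hypotheses verified, Theorem \ref{th:frodo} applies and yields the corollary.

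The only step carrying any real content is the verification of $(i)$, and there the work has already been front-loaded into the convexity hypothesis on $\Ac$: the point of the corollary is precisely that this convexity is exactly the ``concavity in $y$'' ingredient Terkelsen's theorem requires. Everything else — compactness and connectedness of $\Lambda_d$, affineness of $f$ in $\lambda$, closedness of the level sets — is routine, so I do not anticipate a genuine obstacle; the only care needed is to confirm that the distribution $p_0$ produced in $(i)$ genuinely lies in $\Pc$, which is immediate from the way $\Ac$ is defined.
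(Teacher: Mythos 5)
Your proof is correct and follows essentially the same route as the paper: both instantiate Terkelsen's theorem with $X=\Lambda_d$, $Y=\Pc$, $f(\lambda,p)=\sum_i\lambda_i T_i(p)$, derive condition $(i)$ from the convexity of $\Ac$ applied to the midpoint of the two payoff vectors, and derive condition $(ii)$ from the affineness of $f$ in $\lambda$ (the paper phrases the connectedness step via line segments rather than the word ``convex,'' but that is the same argument). No gaps.
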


\begin{proof}
Let $f(\la,p(u)) = \sum_{i=1}^d \la_i T_i(p(u)).$
It suffices to verify that $f(\la,p(u))$ satisfies the conditions of Theorem \ref{th:frodo}. Since the set $\Ac$ is convex, we know that for any $p_1(u), p_2(u) \in \Pc$ we have a distribution $p_c(u) \in \Pc$ such that
$$ T_i(p_c(u)) \geq \frac{1}{2} \big( T_i(p_1(u))  + T_i(p_2(u)) \big), i=1,...,d . $$
Hence (using linearity in $\la$ and non-negativity of $\la_i$) we have
$$ f(\la,p_c(u)) \geq \frac{1}{2} \big( f(\la,p_1(u)) + f(\la,p_2(u))\big), \forall \la \in \Lambda_d. $$

Since $f(\la,p(u))$ is a linear function of $\lambda$, it is immediate that the set
$$\Bc(p(u),\alpha) = \{ \la \in \Lambda_d: f(\la, p(u)) \leq \alpha \} $$
is closed for every pair $(p(u), \alpha) \in \Pc \times \mathbb{R}$. Further, due to the linearity in $\la$, if $\la_1, \la_2 \in \Bc(p(u),\alpha)$, then the line segment joining $\la_1$ and $\la_2$ belongs to $\Bc(p(u),\alpha)$. This implies that a finite intersection of sets, each containing $\lambda_1$ and $\lambda_2$ will also contain the line segment joining $\la_1$ and $\la_2$, showing that the finite intersection will be connected. Therefore finite intersections of the sets of the form $\Bc(p(u),\alpha)$ are closed and connected. Thus the Corollary \ref{coro:mm} follows from Theorem \ref{th:frodo}.
\end{proof}

We will now show how one can use the Corollary \ref{coro:mm} to establish Lemma \ref{le:mm}.

\begin{proof} (Proof of Lemma \ref{le:mm})
It is clear that
\begin{align*}
\max_{p(u,v,w,x)} \min_{\la \in [0,1]} \la\dash SR_M(\qmf,p(u,v,w,x)) &  \leq \max_{p(x)} \min_{{ \la \in [0,1]}} \max_{p(u,v,w|x)} \la\dash SR_M(\qmf,p(u,v,w,x)) \\
&  \leq \min_{{\la \in [0,1]}} \max_{p(u,v,w,x)}  \la\dash SR_M(\qmf,p(u,v,w,x)).
\end{align*}
Therefore suffices to show that
$$\max_{p(u,v,w,x)} \min_{\la \in [0,1]} \la\dash SR_M(\qmf,p(u,v,w,x)) =  \min_{{\la \in [0,1]}} \max_{p(u,v,w,x)} \la\dash SR_M(\qmf,p(u,v,w,x)).$$
Here we take $d=2$ and set
\begin{align*}
T_1(p(u,v,w,x)) &= I(W;Y) + I(U;Y|W) + I(V;Z|W) - I(U;V|W) \\
T_2(p(u,v,w,x)) &= I(W;Z) + I(U;Y|W) + I(V;Z|W) - I(U;V|W)
\end{align*}
It is clear that the set
$$ \Ac = \{(a_1, a_2): a_1 \leq T_1(p(u,v,w,x)), a_2 \leq T_2(p(u,v,w,x)) \} $$
is a convex set. (In the standard manner, choose $\Wt = (W,Q)$, such that conditioned on $Q=0$ set $(U,V,W,X) \sim p_1(u,v,w,x)$ and conditioned on $Q=1$ set $(U,V,W,X) \sim p_2(u,v,w,x)$). Hence from Corollary \ref{coro:mm}, we have the proof of Lemma \ref{le:mm}.
\end{proof}

\begin{remark}
The proof of this  lemma in section 3.1.1 of \cite{gea10} is very similar in flavor and uses the convexity of the set $\Ac$. However here we recover it from an application of some general theorems, and this technique and Corollary \ref{coro:mm} may be helpful in other situations as well.
\end{remark}

\section{Computing $\la\dash SR_M$ for the semi-deterministic channel in Fig. \ref{fif:f1}}
\subsection{Maximum of $\la \dash SR_M$ is obtained at the uniform input distribution}
\label{Apendix0.5}
Consider the semi-deterministic channel $\qmf_1$ corresponding to the upper component of the product broadcast channel in Figure \ref{fif:f1}. In this appendix we show that for any $\lambda\in[0,1]$, $\la\dash SR_M(\qmf_1,p(x))$ is less than or equal to $\la\dash SR_M(\qmf_1,u(x))$ where $u$ is the uniform distribution on $\mathcal{X}_1$.
From Lemma \ref{le:conla} note that $\la\dash SR_M(\qmf,p(x))$ is concave in $p(x)$. 

Take an arbitrary $p(x) \sim (a,b,c,d).$ Here $a,b,c,d$ denote the probabilities  assigned (in order) to variables from  top to botton in the upper half of Figure \ref{fif:f1}.
 Because of the symmetry in the component channels  $\qmf_1(y_1|x_1), \qmf_1(z_1|x_1)$ in Figure \ref{fif:f1}, we have
\begin{align*}\la\dash SR_M(\qmf_1,p(x)\sim(a, b, c, d))&= \la\dash SR_M(\qmf_1,p(x)\sim(b, a, d, c))\\&=
\la\dash SR_M(\qmf_1,p(x)\sim(c, d, a, b))\\&=
\la\dash SR_M(\qmf_1,p(x)\sim(d, c, b, a)).\end{align*}
Here we have used the symmetry between inputs $1$ and $2$, and the symmetry between inputs $3$ and $4$, and the symmetry between the pair of inputs $(1,2)$ and $(3,4)$. Using the concavity of $F$, we have
\begin{align*}4\la\dash SR_M(\qmf,p(x)\sim(a, b, c, d))&=\la\dash SR_M(\qmf,p(x)\sim(a, b, c, d))+\la\dash SR_M(\qmf,p(x)\sim(b, a, d, c))+\\&
\la\dash SR_M(\qmf,p(x)\sim(c, d, a, b))+
\la\dash SR_M(\qmf,p(x)\sim(d, c, b, a))\\&\leq 4\la\dash SR_M(\qmf,p(x)\sim(\frac{1}{4}(a+b+c+d), \frac{1}{4}(a+b+c+d),\\&~~~~~~~~~~~~~~~~~~~~~~~~~~~ \frac{1}{4}(a+b+c+d), \frac{1}{4}(a+b+c+d)))
\\&=4\la\dash SR_M(\qmf,u(x)).\end{align*}
\subsection{Computing the $\la$-sum-rate at the uniform input distribution}
\label{Apendix1}
In this appendix we compute $\la\dash SR_M(\qmf_1,u(x))$ at the uniform input distribution for the semi-deterministic  $\qmf_1$ corresponding to the upper component of the product broadcast channel given in Figure \ref{fif:f1}.
\begin{claim}
The $\lambda\mapsto \la\dash SR_M(\qmf_1,u(x))$ curve for the channel under consideration consists of two lines,
$$\la\dash SR_M(\qmf_1,u(x)) = \begin{cases} \begin{array}{ll} \frac 53 - \frac 23 \la & \la \in [0,\frac 12] \\ \frac 43 & \la \in [\frac 12, 1] \end{array} \end{cases}. $$
\end{claim}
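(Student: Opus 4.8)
The plan is to compute $\la\dash SR(\qmf_1, u(x_1))$ directly by evaluating the optimization $\max_{p(u,v,w|x_1)} \la I(W;Y_1) + (1-\la) I(W;Z_1) + I(U;Y_1|W) + I(V;Z_1|W) - I(U;V|W)$ for the specific channel of Figure \ref{fif:f1} with uniform input. Here $Y_1$ is a deterministic function of $X_1$ (the semi-deterministic side), so $H(Y_1|X_1)=0$; I will use this to simplify terms involving $Y_1$. By Lemma \ref{le:cardlsr} it suffices to take $|\mathcal{W}| \le |\mathcal{X}_1|$, $|\mathcal{U}| \le \min(|\mathcal{X}_1|, |\mathcal{Y}_1|)$, $|\mathcal{V}| \le \min(|\mathcal{X}_1|, |\mathcal{Z}_1|)$, which makes the optimization finite-dimensional. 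The first step is to read off from Figure \ref{fif:f1} the explicit maps $x_1 \mapsto y_1$ and the transition kernel $x_1 \mapsto z_1$ (with the stated probability $\tfrac13$ on the red edges), and to record the resulting values of $H(Y_1)$, $H(Z_1)$, $I(X_1;Y_1)$, $I(X_1;Z_1)$ and of $I(X_1;Y_1|W)$, $I(X_1;Z_1|W)$ for natural choices of $W$.

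Next I would establish a matching upper and lower bound. For the \textbf{upper bound}: since $Y_1$ is deterministic given $X_1$, one has $I(U;Y_1|W) + I(V;Z_1|W) - I(U;V|W) \le I(U;Y_1|W) + I(V;Z_1|U,W) \le H(Y_1|W) + I(X_1;Z_1|W)$ (absorbing $V$ into $X_1$ after conditioning on $U$), so $\la\dash SR(\qmf_1,u(x_1)) \le \max_{p(w|x_1)} \la I(W;Y_1) + (1-\la) I(W;Z_1) + H(Y_1|W) + I(X_1;Z_1|W)$, and similarly with the roles flipped to get a bound in terms of $I(X_1;Y_1|W)$. I then evaluate these reduced one-auxiliary expressions over the (small) set of candidate $W$'s; by the channel's symmetry and the concavity in $p(x_1)$ already established in Appendix \ref{Apendix0.5}, only a few extreme choices of $W$ need to be checked (e.g.\ $W$ constant, $W$ equal to the relevant deterministic output, $W$ a coarsening of $X_1$). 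The envelope of the resulting affine-in-$\la$ functions should produce the two pieces $\tfrac53 - \tfrac23\la$ on $[0,\tfrac12]$ and $\tfrac43$ on $[\tfrac12,1]$, with the break at $\la = \tfrac12$ coming from which of the two linear bounds is active. For the \textbf{lower bound}: I exhibit explicit choices of $(U,V,W)$ achieving each piece --- on $[\tfrac12,1]$ take $W$ constant, $U=Y_1$, $V$ an appropriate binarization of $X_1$ (the same kind of construction used in the proof of Claim \ref{cl:uvsub}, where $\P(X_1=i|V_1=1)$ is spread uniformly), and on $[0,\tfrac12]$ take $W$ a suitable coarsening of $X_1$ so that the $\la I(W;Y_1)$ term is switched on; then I verify the value of $\la\dash SR$ at these choices equals the claimed piecewise function.

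The main obstacle I expect is the \textbf{upper bound}, specifically making rigorous that the reduction to a single auxiliary $W$ followed by checking a finite list of candidate $W$'s actually covers the true maximum --- i.e.\ ruling out that some non-obvious choice of $W$ (or $(U,V)$) beats the claimed curve on part of the interval. Convexity in $\la$ (proved in the lemma on convexity of $\la\dash SR$) helps: the true curve is a convex function of $\la$ lying above the achievable piecewise-linear lower bound and below the piecewise-linear upper envelope, so if the lower and upper bounds coincide except possibly at the kink, convexity forces equality everywhere. Thus the crux is purely the finite optimization over $W$ for the two reduced expressions; I would organize this by using the channel symmetries (inputs $1\leftrightarrow 2$, $3\leftrightarrow 4$, pair $(1,2)\leftrightarrow(3,4)$) to argue the optimal $W$ can be taken symmetric, cutting the candidate list down to a handful of cases, each of which is a short explicit entropy computation with the value $\tfrac13$ appearing through $H(\Bern(\tfrac13))$-type terms and the deterministic structure giving clean values like $H(Y_1)=1$.
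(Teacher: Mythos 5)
Your overall plan (use the determinism to eliminate one auxiliary exactly, then symmetrize and compute) is the right spirit, but the specific upper bound you propose is provably too weak, and the step that breaks is the one you flag as the crux. The chain $I(U;Y|W)+I(V;Z|W)-I(U;V|W)\le I(U;Y|W)+I(V;Z|U,W)\le H(Y|W)+I(X;Z|W)$ decouples the two terms and throws away exactly the tension that determines the answer: the inequalities $I(U;Y|W)\le H(Y|W)$ and $I(X;Z|U,W)\le I(X;Z|W)$ cannot hold with equality simultaneously unless $I(Y;Z|W)=0$, since $I(X;Z|U,W)=I(X;Z|W)-I(U;Z|W)$ and saturating the first forces $U$ to determine $Y$. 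Concretely, for the component channel of Figure~\ref{fif:f1} one output is a deterministic binary function of $X$ (entropy $1$ at the uniform input) and the other is a $6$-ary output with $H(\cdot|X)=\log 3$, so $I(X;\cdot)=1$ for the noisy output as well; evaluating either of your two reduced expressions at $W=\mathrm{const}$ already gives at least $1+1=2$ for every $\la$, and maximizing over $W$ only increases this. Hence the minimum of your two upper envelopes is at least $2$, while the claim asserts the curve never exceeds $\tfrac53$. No enumeration of candidate $W$'s, and no appeal to convexity in $\la$, can close a gap of $\tfrac13$ between an achievable lower bound and a loose upper bound.

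The paper's proof keeps the coupling. Using the determinism it sets the corresponding auxiliary equal to the deterministic output \emph{exactly} (e.g.\ $V=Z$ when $Z$ is deterministic, via $I(V;Z|W)-I(U;V|W)\le H(Z|U,W)$ with equality at $V=Z$), which reduces the objective to $\la I(W;Y)+(1-\la)I(W;Z)+I(U;Y|W)+H(Z|U,W)$ with \emph{both} $W$ and $U$ still free. The real work is then the inner optimization $f(a,b,c,d)=\max_{p(u|x)}\big(H(Z|U)-H(Y|U)\big)$, which is solved by a dedicated boundary/convexity argument showing $H(Z)-H(Y)\le\tfrac13 H(a+b,c+d)-\log 3$ with equality achieved by a binary $U$, together with a symmetrization (concavity of the inner functions plus the input symmetries $1\leftrightarrow2$, $3\leftrightarrow4$) that restricts $\P(X|W=i)$ to the form $(\tfrac{x_i}{2},\tfrac{x_i}{2},\tfrac{1-x_i}{2},\tfrac{1-x_i}{2})$; the resulting one-dimensional problem in $\sum_i w_iH(x_i,1-x_i)$ produces the kink at $\la=\tfrac12$. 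Your proposal has no substitute for the computation of $f$, and the decoupled bound cannot replace it. (A secondary point: for the first component it is the binary output that is deterministic and the $6$-ary one that is noisy, so identifications like $U=Y_1$ versus $V=Z_1$ need to be matched to the actual figure; but this labeling issue is minor compared with the loss in the upper bound.)
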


\begin{proof} Note that
{\small\begin{align*}\la\dash SR_M(\qmf_1,u(x))&=
\max_{p(u,v,w|x)}\big\{\lambda I(W;Y)+(1-\lambda)I(W;Z)+I(U;Y|W)+I(V;Z|W)-I(U;V|W)\big\}\\&=
\max_{p(u,w|x)}\big\{\lambda I(W;Y)+(1-\lambda)I(W;Z)+I(U;Y|W)+H(Z|UW)\big\}.
\end{align*}}
In the last step we have used the inequality $I(V;Z|W)-I(U;V|W)\leq H(Z|UW)$ together with the fact that $I(Z;Z|W)-I(U;Z|W)= H(Z|UW)$ (thus setting $V=Z$, permissible under the semi-deterministic channel setting, is an optimal choice for $V$). Therefore $\la\dash SR_M(\qmf_1,u(x))$ can be written as
\begin{align*}&\max_{p(u,w|x)}\big\{\la H(Y) + (1-\la) H(Z) + (1-\la) \big( H(Y|W) - H(Z|W) \big) + H(Z|UW) - H(Y|UW)\big\},\end{align*}
which is equal to
\begin{align}&\la H(Y) + (1-\la) H(Z)+\max_{p(w|x)}\big\{(1-\la) \big( H(Y|W) - H(Z|W) \big) +\nonumber\\& \max_{p(u|w,x)}\big(H(Z|UW) - H(Y|UW)\big)\big\}.\label{eqn:AppndB1}\end{align}

Let $\P(X|W=i) = (a_i, b_i, c_i, d_i)$, and $f(a_i, b_i,c_i,d_i) = \max_{p(u|x)} H(Z|U) - H(Y|U)$ conditioned on $\p(X)= (a_i, b_i, c_i, d_i)$. Observe that $f$ is concave. The argument is similar to the one given in Lemma \ref{le:conla} and we will not repeat it here. Further, observe that $f(a_i, b_i,c_i,d_i) = f(b_i, a_i, d_i, c_i)$ because the symmetry between inputs $1$ and $2$, and the symmetry between inputs $3$ and $4$.

Consider the transformation $(a_i, b_i, c_i, d_i) \to (b_i, a_i, d_i, c_i),$ for all $i$ while leaving $\P(W=i)$ unchanged. This preserves expression in equation (\ref{eqn:AppndB1}) because of the symmetry between inputs $1$ and $2$, and the symmetry between inputs $3$ and $4$. Thus the transformation $(a_i, b_i, c_i, d_i) \to (\frac{a_i+b_i}{2},\frac{a_i+b_i}{2}, \frac{c_i+d_i}{2}, \frac{c_i+d_i}{2}),$ for all $i$ while leaving $\P(W=i)$ unchanged, does not decrease the $\la$-sum-rate since $H(Y|W)$ and $f$ are concave functions in $(a_i, b_i,c_i,d_i)$, and $H(Z|W)$ that appears with a negative sign remains constant under this transformation. Therefore without loss of generality assume that $\p(X|W=i) = (\frac{x_i}{2}, \frac{x_i}{2}, \frac{1-x_i}{2}, \frac{1-x_i}{2})$ when optimizing the expression  in equation (\ref{eqn:AppndB1}). Let $\P(W=i) = w_i$. Then we require $\sum w_i x_i = \frac 12$.

Hence we can work out $\la\dash SR_M(\qmf_1,u(x))$ as the maximum over $w_i,x_i$ of the expression 
$$ \la \log 6 + (1-\la) + (1-\la) \sum_i w_i [ \log 3 + \frac 23 - \frac 23 H(x_i,1-x_i)] + \sum_i  w_i f(\frac{x_i}{2}, \frac{x_i}{2}, \frac{1-x_i}{2}, \frac{1-x_i}{2}),$$
over $(w_i,x_i)$ that satisfy $\sum w_i x_i = \frac 12$.

We now compute $f(\frac{x_i}{2}, \frac{x_i}{2}, \frac{1-x_i}{2}, \frac{1-x_i}{2})$. Observe that
\begin{align*}
H(Z) - H(Y) &= H(a +b, c+d) - H(\frac{a + b}{3}, \frac{a + c}{3}, \frac{a + d}{3}, \frac{b + c}{3}, \frac{b + d}{3}, \frac{c + d}{3}) \\
& \stackrel{(a)}{\leq} H(a +b, c+d) - H(\frac{a + b}{3}, \frac{a + c + d}{3}, \frac{a }{3}, \frac{b + c + d}{3}, \frac{b }{3}, \frac{c + d}{3}) \\
& \stackrel{(b)}{\leq} H(a +b, c+d) - H(\frac{a + b}{3}, \frac{a + b + c + d}{3}, \frac{a + b }{3}, \frac{ c + d}{3}, \frac{0 }{3}, \frac{c + d}{3}) \\
& = \frac 13 H(a +b, c+d) - \log 3.
\end{align*}
The step $(a)$ holds because the expression is convex in $c$ and $d$ once we fix $c+d$, therefore its maximum must occur at the boundaries. The step $(b)$ holds because the expression is convex in $a$ and $b$ once we fix $a+b$, therefore its maximum must occur at the boundaries.

Therefore $H(Z) - H(Y)\leq \frac 13 H(a +b, c+d) - \log 3$ for all permissible $(a,b,c,d)$. Since the function $\frac 13 H(a +b, c+d) - \log 3$ is concave, we conclude that $f(a_i, b_i,c_i,d_i) \leq \frac 13 H(a +b, c+d) - \log 3$ for all permissible $(a,b,c,d)$. Hence, at $(a,b,c,d)=(\frac{x_i}{2}, \frac{x_i}{2}, \frac{1-x_i}{2}, \frac{1-x_i}{2})$, we have
$$f(\frac{x_i}{2}, \frac{x_i}{2}, \frac{1-x_i}{2}, \frac{1-x_i}{2}) \leq  \frac 13 H(x_i,1-x_i) - \log 3.$$
The equality can be indeed achieved by taking with probability half $(0, x_i, 0, 1-x_i)$ and with probability half $(x_i, 0, 1-x_i, 0)$. Thus,
$f(\frac{x_i}{2}, \frac{x_i}{2}, \frac{1-x_i}{2}, \frac{1-x_i}{2}) = \frac 13 H(x_i,1-x_i) - \log 3$.

Substituting this in we get
$$ 1 + (1 - \la) \frac 23 + (\frac 13 - \frac 23 (1-\la)) \sum_i w_i H(x_i, 1-x_i).$$
We need to maximize this subject to $\sum w_i x_i = \frac 12$. Clearly when $(1-\la) \leq \frac 12$ the optimal choice is to set $x_i = \frac 12$. This yields a value of $\frac 43$ when $\lambda \geq \frac 12$. In the other interval, it is optimal to set $x_i=0$ w.p. $\frac 12$ and $x_i = 1$ w.p. $\frac 12$. In this case, i.e. $\la \in [0,\frac 12]$, we get $1 + (1-\la)\frac 23 = \frac 53 - \frac 23 \la$.
\end{proof}

\section{Proof of outer bound (Claim \ref{cl:obp}) for product broadcast channels}
\label{sec:pfob}

\begin{proof} Take a code of length $n$. Let $Q$ be a random variable independent of the code book such that $Q$ is uniform in $[1:n]$. Identify \begin{align*}&W_{1} = (M_0, Z_2^{1:n}, Y_{1}^{1:Q-1}, Z_{1}^{Q+1:n}, Q),\\&W_{2} = (M_0, Y_1^{1:n}, Y_{2}^{1:Q-1}, Z_{2}^{Q+1:n}, Q),\\&U_{1} = U_{2} = M_1,\\&V_{1} = V_{2} = M_2,\\&X_1=X_{1Q},\\&X_2=X_{2Q}.\end{align*} We need to verify that these choice of auxiliaries work. We begin with the sum rate. Using the Fano inequality and some manipulations we can write
{\small \begin{align*}
& n (R_0 + R_1 + R_2) - n f_1(\e_n) \\
& \leq \la I(M_0;Y_1^{1:n}, Y_2^{1:n}) + (1-\la) I(M_0;Z_1^{1:n}, Z_2^{1:n}) + I(M_1;Y_1^{1:n}, Y_2^{1:n}|M_0) + I(M_2;Z_1^{1:n}, Z_2^{1:n}|M_0) - I(M_1;M_2|M_0) \\
& = \la I(M_0;Y_1^{1:n}, Y_2^{1:n}) + (1-\la) I(M_0;Z_1^{1:n}, Z_2^{1:n}) + I(M_1;Y_1^{1:n}, Y_2^{1:n}|M_0) + I(M_2;Z_1^{1:n}, Z_2^{1:n}|M_1,M_0) \\&\quad - I(M_1;M_2|M_0, Z_1^{1:n}, Z_2^{1:n}) \\
& = \la I(M_0;Y_2^{1:n}|Y_1^{1:n}) + (1-\la) I(M_0; Z_2^{1:n}) + I(M_1;Y_2^{1:n}|M_0,Y_1^{1:n}) + I(M_2;Z_2^{1:n}|M_1,M_0) \\
& \quad + \la I(M_0;Y_1^{1:n}) + (1-\la) I(M_0; Z_1^{1:n}|Z_2^{1:n})  + I(M_1;Y_1^{1:n}|M_0) + I(M_2;Z_1^{1:n}|M_1,Z_2^{1:n}, M_0) - I(M_1;M_2|M_0, Z_1^{1:n}, Z_2^{1:n}) \\
& =  \la I(M_0;Y_2^{1:n}|Y_1^{1:n}) + (1-\la) I(M_0; Z_2^{1:n}) + I(M_1;Y_2^{1:n}|M_0,Y_1^{1:n}) + I(M_2;Z_2^{1:n}|M_1,M_0) \\
& \quad + \la I(M_0;Y_1^{1:n}) + (1-\la) I(M_0; Z_1^{1:n}|Z_2^{1:n}) + I(M_1;Y_1^{1:n}|M_0) + I(M_2;Z_1^{1:n}|Z_2^{1:n}, M_0) - I(M_1;M_2|M_0,  Z_2^{1:n}) \\
& \leq \la I(M_0;Y_2^{1:n}|Y_1^{1:n}) + (1-\la) I(M_0, Y_1^{1:n}; Z_2^{1:n})+ I(M_1;Y_2^{1:n}|M_0,Y_1^{1:n}) + I(M_2;Z_2^{1:n}|M_1,M_0, Y_1^{1:n}) \\
& \quad + \la I(M_0, Z_2^{1:n};Y_1^{1:n}) + (1-\la) I(M_0; Z_1^{1:n}|Z_2^{1:n}) + I(M_1;Y_1^{1:n}|M_0,Z_2^{1:n}) + I(M_2;Z_1^{1:n}|Z_2^{1:n}, M_0) - I(M_1;M_2|M_0,  Z_2^{1:n}) \\
& \leq \la I(M_0;Y_2^{1:n}|Y_1^{1:n}) + (1-\la) I(M_0, Y_1^{1:n}; Z_2^{1:n})+ I(M_1;Y_2^{1:n}|M_0,Y_1^{1:n}) + I(X_2^{1:n};Z_2^{1:n}|M_1,M_0, Y_1^{1:n}) \\
& \quad + \la I(M_0, Z_2^{1:n};Y_1^{1:n}) + (1-\la) I(M_0; Z_1^{1:n}|Z_2^{1:n}) + I(M_1;Y_1^{1:n}|M_0,Z_2^{1:n}) + I(M_2;Z_1^{1:n}|Z_2^{1:n}, M_0) - I(M_1;M_2|M_0,  Z_2^{1:n}) \\
& \leq \la I(M_0, Y_1^{1:n};Y_2^{1:n}) + (1-\la) I(M_0, Y_1^{1:n}; Z_2^{1:n})+ I(M_1;Y_2^{1:n}|M_0,Y_1^{1:n}) + I(X_2^{1:n};Z_2^{1:n}|M_1,M_0, Y_1^{1:n}) \\
& \quad + \la I(M_0, Z_2^{1:n};Y_1^{1:n}) + (1-\la) I(M_0, Z_2^{1:n}; Z_1^{1:n}) + I(M_1;Y_1^{1:n}|M_0,Z_2^{1:n}) + I(M_2;Z_1^{1:n}|M_0, Z_2^{1:n}) - I(M_1;M_2|M_0,  Z_2^{1:n})
\end{align*}}
where $f_1(\epsilon)$ is a function that converges to zero as $\epsilon$ converges to zero.
Thus,
{\small \begin{align*}
& n (R_0 + R_1 + R_2) - n f_1(\e_n) \\
& \leq \la I(M_0, Y_1^{1:n};Y_2^{1:n}) + (1-\la) I(M_0, Y_1^{1:n}; Z_2^{1:n})+ I(M_1;Y_2^{1:n}|M_0,Y_1^{1:n}) + I(X_2^{1:n};Z_2^{1:n}|M_1,M_0, Y_1^{1:n}) \\
& \quad + \la I(M_0, Z_2^{1:n};Y_1^{1:n}) + (1-\la) I(M_0, Z_2^{1:n}; Z_1^{1:n}) + I(M_1;Y_1^{1:n}|M_0,Z_2^{1:n}) + I(M_2;Z_1^{1:n}|M_0, Z_2^{1:n}) - I(M_1;M_2|M_0,  Z_2^{1:n}).
\end{align*}}
Similarly
{\small \begin{align*}
& n (R_0 + R_1 + R_2) - n f_2(\e_n) \\
& \leq \la I(M_0, Y_1^{1:n};Y_2^{1:n}) + (1-\la) I(M_0, Y_1^{1:n}; Z_2^{1:n})+ I(M_1;Y_2^{1:n}|M_0,Y_1^{1:n}) + I(M_2;Z_2^{1:n}|M_0,Y_1^{1:n})-I(M_1;M_2|M_0,Y_1^{1:n})
\\
& \quad + \la I(M_0, Z_2^{1:n};Y_1^{1:n}) + (1-\la) I(M_0, Z_2^{1:n}; Z_1^{1:n}) + I(M_2;Z_1^{1:n}|M_0, Z_2^{1:n})+I(X_1^{1:n};Y_1^{1:n}|M_0,M_2,Z_2^{1:n}).
\end{align*}}

These lead to the following single letter bounds:
{\small \begin{align*}
R_0 + R_1 + R_2 & \leq \la I(W_2;Y_2) + (1-\la) I(W_2;Z_2) + I(U_2;Y_2|W_2) + I(X_2;Z_2|U_2, W_2) \\
& \quad +  \la I(W_1;Y_1) + (1-\la) I(W_1;Z_1)  + \min \big\{ I(U_1;Y_1|W_1) + I(X_1; Z_1|U_1, W_1), \\
& \qquad \qquad  I(V_1;Z_1|W_1) + I(X_1; Y_1|V_1, W_1) \big\},\\
R_0 + R_1 + R_2 & \leq \la I(W_2;Y_2) + (1-\la) I(W_2;Z_2) + \min \big\{ I(U_2;Y_2|W_2) + I(X_2; Z_2|U_2, W_2), \\
& \qquad \qquad  I(V_2;Z_2|W_2) + I(X_2; Y_2|V_2, W_2) \big\} \\
& \quad +  \la I(W_1;Y_1) + (1-\la) I(W_1;Z_1)  + I(V_1;Z_1|W_1) + I(X_1;Y_1|V_1, W_1).
\end{align*}}

Since the choice of the auxiliaries do not depend on $\lambda$, we conclude that
{\small \begin{align*}
R_0 + R_1 + R_2 & \leq \min \{ I(W_1; Y_1) + I(W_2; Y_2), I(W_1; Z_1) + I(W_2; Z_2) \}  + I(U_2;Y_2|W_2) + I(X_2;Z_2|U_2, W_2) \\
& \qquad + \min \big\{ I(U_1;Y_1|W_1) + I(X_1; Z_1|U_1, W_1), I(V_1;Z_1|W_1) + I(X_1; Y_1|V_1, W_1) \big\},\\
R_0 + R_1 + R_2 & \leq \min \{ I(W_1; Y_1) + I(W_2; Y_2), I(W_1; Z_1) + I(W_2; Z_2) \} + I(V_1;Z_1|W_1) + I(X_1;Y_1|V_1, W_1) \\&\qquad+ \min \big\{ I(U_2;Y_2|W_2) + I(X_2; Z_2|U_2, W_2),   I(V_2;Z_2|W_2) + I(X_2; Y_2|V_2, W_2) \big\}.
\end{align*}}

It remains to verify the following inequalities
{\small \begin{align*}
R_0 & \leq I(W_1; Y_1) + I(W_2; Y_2),\\
R_0 & \leq I(W_1; Z_1) + I(W_2; Z_2), \\
R_0 + R_1 & \leq I(W_1; Y_1) + I(W_2; Y_2)+I(U_1; Y_1|W_1) + I(U_2; Y_2|W_2), \\
R_0 + R_1 & \leq I(W_1; Z_1) + I(W_2; Z_2)+I(U_1; Y_1|W_1) + I(U_2; Y_2|W_2), \\
R_0 + R_2 & \leq I(W_1; Y_1) + I(W_2; Y_2)+I(V_1; Z_1|W_1) + I(V_2; Z_2|W_2),\\
R_0 + R_2 & \leq I(W_1; Z_1) + I(W_2; Z_2)+I(V_1; Z_1|W_1) + I(V_2; Z_2|W_2).
\end{align*}}
The first single-letter formula holds because one can verify that $I(W_1; Y_1)\geq \frac{1}{n}I(M_0;Y_1^{1:n})$ and $I(W_2; Y_2)\geq \frac{1}{n}I(M_0;Y_2^{1:n}|Y_1^{1:n})$. These imply that $I(W_1; Y_1) + I(W_2; Y_2) \geq I(M_0;Y_2^{1:n},Y_1^{1:n})$. One can finish the proof using the Fano inequality. The second inequality on $R_0$ can be proved similarly. The third inequality holds because
$I(U_1W_1; Y_1)\geq \frac{1}{n}I(M_0M_1;Y_1^{1:n})$, $I(U_2W_2; Y_2)\geq \frac{1}{n}I(M_0M_1;Y_2^{1:n}|Y_1^{1:n})$ and $(M_0, M_1)$ can be recovered from $(Y_1^{1:n},Y_2^{1:n})$ with high probability. The fourth inequality holds because:
{\small \begin{align*}
&n(R_0+R_1)-nf_3(\epsilon) \\
& \leq  I(M_0;Z_1^{1:n},Z_2^{1:n})+I(M_1;Y_1^{1:n},Y_2^{1:n}|M_0) \\
& \leq  I(M_0;Z_1^{1:n},Z_2^{1:n})+I(M_1,Z_2^{1:n};Y_1^{1:n}|M_0)+I(M_1;Y_2^{1:n}|M_0, Y_1^{1:n}) \\
& =  I(M_0;Z_1^{1:n}|Z_2^{1:n})+I(M_0;Z_2^{1:n})+I(Y_1^{1:n};Z_2^{1:n}|M_0)+I(M_1;Y_1^{1:n}|M_0, Z_2^{1:n})+I(M_1;Y_2^{1:n}|M_0, Y_1^{1:n}) \\
& \leq  I(M_0, Z_2^{1:n};Z_1^{1:n})+I(M_0, Y_1^{1:n};Z_2^{1:n})+I(M_1;Y_1^{1:n}|M_0, Z_2^{1:n})+I(M_1;Y_2^{1:n}|M_0, Y_1^{1:n}) \\
& =  \sum_{i=1}^n\big(I(M_0, Z_2^{1:n};Z_1^i|Z_{1}^{i+1:n})+I(M_0, Y_1^{1:n};Z_2^i|Z_{2}^{i+1:n})
 +I(M_1;Y_1^i|M_0, Z_2^{1:n}, Y_{1}^{1:i-1}) +I(M_1;Y_2^i|M_0, Y_1^{1:n}, Y_{2}^{1:i-1})\big) \\
 & \leq  \sum_{i=1}^n\big(I(M_0, Z_2^{1:n}, Z_{1}^{i+1:n};Z_1^i)+I(M_0, Y_1^{1:n}, Z_{2}^{i+1:n};Z_2^i)
 +I(M_1;Y_1^i|M_0, Z_2^{1:n}, Y_{1}^{1:i-1}) +I(M_1;Y_2^i|M_0, Y_1^{1:n}, Y_{2}^{1:i-1})\big)\\
 & = \sum_{i=1}^n\big(I(M_0, Z_2^{1:n}, Y_{1}^{1:i-1}, Z_{1}^{i+1:n};Z_1^i)+I(M_0, Y_1^{1:n}, Y_{2}^{1:i-1}, Z_{2}^{i+1:n};Z_2^i)
 +I(M_1;Y_1^i|M_0, Z_2^{1:n}, Y_{1}^{1:i-1})\\
 & \qquad -I(Y_{1}^{1:i-1};Z_1^i|M_0, Z_2^{1:n}, Z_{1}^{i+1:n})
 +I(M_1;Y_2^i|M_0, Y_1^{1:n}, Y_{2}^{1:i-1})-I(Y_{2}^{1:i-1};Z_2^i|M_0, Y_1^{1:n}, Z_{2}^{i+1:n})\big)
 \end{align*}
 
 \begin{align*}
 &=  \sum_{i=1}^n\big(I(M_0, Z_2^{1:n}, Y_{1}^{1:i-1}, Z_{1}^{i+1:n};Z_1^i)+I(M_0, Y_1^{1:n}, Y_{2}^{1:i-1}, Z_{2}^{i+1:n};Z_2^i)
 +I(M_1;Y_1^i|M_0, Z_2^{1:n}, Y_{1}^{1:i-1}) \\
 &\qquad -I(Z_{1}^{i+1:n};Y_1^i|M_0, Z_2^{1:n}, Y_{1}^{1:i-1})
 +I(M_1;Y_2^i|M_0, Y_1^{1:n}, Y_{2}^{1:i-1})-I(Z_{2}^{i+1:n};Y_2^i|M_0, Y_1^{1:n}, Y_{2}^{1:i-1})\big)\\
 & =  \sum_{i=1}^n\big(I(M_0, Z_2^{1:n}, Y_{1}^{1:i-1}, Z_{1}^{i+1:n};Z_1^i)+I(M_0, Y_1^{1:n}, Y_{2}^{1:i-1}, Z_{2}^{i+1:n};Z_2^i)
 +I(M_1;Y_1^i|M_0, Z_2^{1:n}, Y_{1}^{1:i-1}, Z_{1}^{i+1:n})\\
 & \qquad +I(M_1;Y_2^i|M_0, Y_1^{1:n}, Y_{2}^{1:i-1}, Z_{2}^{i+1:n})\big) \\
 & = n(I(W_1; Z_1) + I(W_2; Z_2)+I(U_1; Y_1|W_1) + I(U_2; Y_2|W_2)).
\end{align*}}
The fifth inequality follows in a similar fashion, and the sixth one is similar to the third one. Hence the outer bound is valid.
\end{proof}

\section{Proof of Lemma \ref{le:mm1}}
\label{sec:Appendix:2}
{\em Proof of Lemma \ref{le:mm1}}: This is a consequence of Corollary \ref{coro:mm}. Let $d=2$, let
\begin{align*}
& T_1(p(w,x_1, x_2)) \\
& \quad = I(W;Y_1, Y_2) + \sum_{w \in \Ac_1} \P(W=w) I(X_1, X_2; Y_1, Y_2|W=w) \\
& \qquad + \sum_{w \in \Ac_2} \P(W=w) \big( I(X_1; Y_1, Y_2|W=w) + I(X_2; Z_1, Z_2|W=w) - I(X_1; X_2|W=w) \big) \\
&\qquad + \sum_{w \in \Ac_3} \P(W=w) \big( I(X_1; Y_1, Y_2|W=w) + I(X_2; Z_1, Z_2|W=w) - I(X_1; X_2|W=w) \big) \\
& \qquad + \sum_{w \in \Ac_4} \P(W=w) I(X_1, X_2; Z_1, Z_2|W=w).
\end{align*}

\begin{align*}
& T_2(p(w,x_1, x_2)) \\
& \quad = I(W;Z_1, Z_2) + \sum_{w \in \Ac_1} \P(W=w) I(X_1, X_2; Y_1, Y_2|W=w) \\
& \qquad + \sum_{w \in \Ac_2} \P(W=w) \big( I(X_1; Y_1, Y_2|W=w) + I(X_2; Z_1, Z_2|W=w) - I(X_1; X_2|W=w) \big) \\
&\qquad + \sum_{w \in \Ac_3} \P(W=w) \big( I(X_1; Y_1, Y_2|W=w) + I(X_2; Z_1, Z_2|W=w) - I(X_1; X_2|W=w) \big) \\
& \qquad + \sum_{w \in \Ac_4} \P(W=w) I(X_1, X_2; Z_1, Z_2|W=w).
\end{align*}

It is clear that the set
$$ \Gc = \{(g_1, g_2): g_1 \leq T_1(p(w,x_1, x_2)), g_2 \leq T_2(p(w,x_1, x_2)) \} $$
is a convex set. (In the standard manner, choose $\Wt = (W,Q)$; When $Q=0$ choose $(W,X_1, X_2) \sim p_1(w,x_1, x_2)$ and when $Q=1$ choose $(W,X_1, X_2) \sim p_2(w,x_1, x_2)$). Hence from Corollary \ref{coro:mm}, we have the proof of Lemma \ref{le:mm1}.

\end{document}